\newcommand{\N}{{\mathbb N}}
\newcommand{\Z}{{\mathbb Z}}
\newcommand{\mod}{{\rm \,mod\,}}
\newtheorem{definition}{Definition}
\newtheorem{lemma}{Lemma}
\newtheorem{theorem}{Theorem}
\newenvironment{proof}[1][Proof]{\begin{trivlist}
\item[\hskip \labelsep {\bfseries #1}]}{\end{trivlist}}
\title{Simplifying Nondeterministic Finite Cover Automata}
\author{Cezar C\^ampeanu
\institute{Department of Computer Science and Information Technology,\\
The University of Prince Edward Island, Canada}
\email{\quad ccampeanu@upei.ca}
}
\begin{document}
\sloppy

\maketitle
\begin{abstract}
The concept of Deterministic Finite Cover Automata (DFCA) was introduced at WIA '98, 
as a more compact representation than Deterministic Finite Automata (DFA) for finite languages.
In some cases representing a finite language, Nondeterministic Finite Automata (NFA) may 
significantly reduce  the number of states used. 
The combined power of the succinctness of the representation of finite languages using 
both cover languages and non-determinism has been suggested, but never  systematically studied. 
In the present paper, for  nondeterministic finite cover automata (NFCA) and $l$-nondeterministic 
finite cover automaton ($l$-NFCA), we show that minimization can be as hard as minimizing NFAs for regular languages, 
even in the case of NFCAs using unary alphabets. Moreover, we show how we can adapt the 
methods used to reduce, or minimize the size of NFAs/DFCAs/$l$-DFCAs, for 
simplifying NFCAs/$l$-NFCAs.
\end{abstract}
\section{Introduction}

The race to find more compact representation for finite languages was started in 1959, 
when Michael O. Rabin and Dana Scott introduced the notion of Nondeterministic Finite Automata, 
and showed that the equivalent Deterministic Finite Automaton can be, 
in terms of number of states,  exponential larger than the NFA.
Since, it was proved in \cite{moore} that  we can obtain a polynomial algorithm for minimizing
DFAs, and in \cite{hopcroft} was proved that an $O(n\log n)$ algorithm exists.
In the meantime, several heuristic approaches have been proposed to reduce 
the size of NFAs \cite{heuristic,ilieYunfa}, but it was proved by 
Jiang and  Ravikumar \cite{ravikumar} that  NFA minimization problems are
hard; even in case of regular languages over a one letter alphabet, 
the minimization is NP-complete \cite{GruberHolzerNFAHard,ravikumar}.

On the other hand, in case of finite languages, we can obtain minimizing algorithms \cite{maslov,revuz} 
that are in the order of $O(n)$, where $n$ is the number of states of the original DFA.
In \cite{gapIJFCS,CoverAutomata,KornerGoeman} it has been shown that 
using Deterministic Finite Cover Automata to represent finite languages,
we have minimization algorithms as efficient as the best known algorithm for minimizing DFAs 
for regular languages. 

The study of the state complexity of operations on regular languages was initiated 
by Maslov in 1970 \cite{maslov,maslov1}, but has not become 
a subject of systematic study until 1992 \cite{KaiShengsc92}.
The special case of state complexity of operations 
on finite languages was studied in \cite{finiteop}.

Nondeterministic state complexity of regular languages 
was also subject of interest, for example in 
\cite{holzerKutribNFA,holzerKutribUnary,holzerKutribLata09,holzerKutribNFA09}.
To find lower bounds for the nondeterministic 
state complexity of regular languages, the fooling set technique,
or the extended fooling set technique may 
be used \cite{birget,Shallit,GruberHolzerNFAHard}.

In this paper we show that NFCA state complexity 
for a finite language $L$ can be exponentially lower 
than NFA or DFCA state complexity of the same language.
We modify the fooling set technique for cover automata, 
to help us prove lower bounds for NFCA state complexity
in section \ref{slowerbounds}. We also show that the (extended) fooling set technique 
is not optimal, as we have minimal NFCAs with arbitrary number of states, and the largest fooling set 
has constant size.
In section \ref{shard} we show that minimizing NFCAs is hard, and in section~\ref{sheuristic}
we show that heuristic approaches for minimizing DFAs or NFAs need a special treatment
 when applied to NFCAs, as many results valid for the DFCAs are no longer true for NFCAs.
In section \ref{openproblems}, we formulate a few open problems and future research directions.

\section{Notations and definitions}

The number of elements of a set $T$ is denoted by $\#T$. 
In case $\Sigma$ is an alphabet, i.e, finite non-empty set, the free monoid generated 
by $\Sigma$ is $\Sigma^*$, and it is the set of all words over $\Sigma$.
The length of a word $w=w_1w_2\ldots w_n$, $n\geq 0$, $w_i\in \Sigma$, 
$1\leq i\leq n$, is $|w|=n$.
The set of words of length equal to $l$ is $\Sigma^l$, the set of words of length less than or equal 
to $l$ is denoted by $\Sigma^{\leq l}$. 
In a similar fashion, we define $\Sigma^{\geq l}$, $\Sigma^{< l}$, or $\Sigma^{> l}$.
A finite automaton is a structure  $A=(Q,\Sigma,\delta,q_0,F)$, where
$Q$ is a finite non-empty set called the set of states, $\Sigma$ is an alphabet, $q_0\in Q$,  
$F\subseteq Q$ is the set of final states, and $\delta$ is the transition function.
For delta, we distinguish the following cases:
\begin{itemize}
 \item if $\delta:Q\times\Sigma\stackrel{\circ}{\longrightarrow} Q$, the automaton is deterministic; 
in case $\delta$ is always defined, the automaton is complete, otherwise it is incomplete;
 \item if $\delta:Q\times\Sigma\longrightarrow 2^Q$, the automaton is non-deterministic.
\end{itemize}
The language accepted by an automaton is defined by:
$L(A)=\{w\in \Sigma^*\mid \delta(\{q_0\},w)\cap F\neq \emptyset\}$, where
$\delta(S,w)$ is defined as follows:
$$\delta(S,\varepsilon)=S,$$
$$\delta(S,wa)=\displaystyle\bigcup_{q\in \delta(S,w)}\delta(\{q\},a).$$
Of course, $\delta(\{q\},a)=\{\delta(q,a)\}$ in case the automaton is deterministic, and 
$\delta(\{q\},a)=\delta(q,a)$, in case the automaton is non-deterministic.
\begin{definition}
Let $L$ be a finite language, and $l$ be the length of the longest word $w$ in $L$, i.e.,
$l=\max\{|w| \mid w\in L\}$\footnote{ We use the convention that $max\emptyset=0$.}.
If $L$ is a finite language, $L'$ is a cover language for $L$ if $L'\cap \Sigma^{\leq l}=L$.

A cover automaton for a finite language $L$ is an automaton that
recognizes a cover language, $L'$, for $L$.
An $l$-NFCA $A$ is a cover automaton for the language $L(A)\cap \Sigma^{\leq l}$.
\end{definition}

One could plainly see  that any automaton that recognizes $L$ is also a cover automaton.

The level of a state $s\in Q$ in a cover automaton $A=(Q,\Sigma,\delta,q_0,F)$ is
the length of the shortest word that can reach the state $s$, i.e., 
$level_A(s)=min\{|w|\mid s\in\delta(q_0,w)\}$.

Let us denote by $x_A(s)$ the smallest word $w$, according to quasi-lexicographical order, such that
$s\in \delta(q_0,w)$, see \cite{CoverAutomata} for a similar definition in case of DFCA.
Obviously,  $level_A(s)=|x_A(s)|$.

For a regular language $L$, $\equiv_L$ denotes the Myhil-Nerode equivalence of words \cite{hopcroft79}.

The similarity relation induced by a finite language $L$ is defined as follows\cite{CoverAutomata}:
$x\sim_L y$, if
for all $w\in \Sigma^{\leq l-\max\{|x|,|y|\}}$, $xw\in L$ iff $yw\in L$.
A dissimilar sequence for a finite language $L$ 
is a sequence $x_1,\ldots, x_n$ such that $x_i\not\sim_L x_j$, for all $1\leq i,j\leq n$
and
 $i\neq j$.

Now, we need to define the similarity for states in an NFCA, 
since it was the main notion used for DFCA minimization.

\begin{definition}
In an NFCA $A=(Q,\Sigma,\delta,q_0,F)$, two states $p,q\in Q$ are similar, written $s\sim_A q$,  if
$\delta(p,w)\cap F\neq \emptyset$ iff $\delta(q,w)\cap F\neq \emptyset$, 
for all $w\in \Sigma^{\leq l-\max\{level(p),level(q)\}}$.
\end{definition}

In case the NFCA $A$ is understood, we may omit the subscript $A$, i.e., we 
write $p\sim q$ instead of $p\sim_Aq$, also we can write $level(p)$ instead of $level_A(p)$.

We consider only non-trivial NFCAs for $L$, i.e., 
 NFCAs such that $level(p)\leq l$ for all states $p$. In case
$level(p)>l$, $p$ can be eliminated, and the resulting NFA is still a NFCA for $L$.
In this case, if $p\sim q$, then  either $p,q\in F$, or $p,q\in Q\setminus F$, because $|\varepsilon|\leq 
l-\max\{level(p),level(q)\}$.

Deterministic state complexity of a regular language $L$ is defined as the number of states of the minimal 
deterministic automaton recognizing $L$, and it is denoted by $sc(L)$:
$$
sc(L)=\min\{\#Q\mid A=(Q,\Sigma,\delta, q_o,F),\mbox{ deterministic, complete, and }L=L(A)\}.
$$ 
Non-deterministic state complexity of a regular language $L$ is defined as the number of states of the minimal 
non-deterministic automaton recognizing $L$, and it is denoted by $nsc(L)$:
$$
nsc(L)=\min\{\#Q\mid A=(Q,\Sigma,\delta, q_o,F), \mbox{ non-deterministic  and }L=L(A)\}.
$$ 
For finite  languages $L$, we can also define deterministic cover state complexity $csc(L)$
and non-deterministic cover state complexity $ncsc(L)$:
\begin{eqnarray*}
csc(L) & = & \min\{\#Q\mid A=(Q,\Sigma,\delta, q_o,F), \mbox{ deterministic, complete, and }\\
       &   & L=L(A)\cap \Sigma^{\leq l}\},\\
ncsc(L)& = & \min\{\#Q\mid A=(Q,\Sigma,\delta, q_o,F),\mbox{ non-deterministic,  and }\\
       &   & L=L(A)\cap \Sigma^{\leq l}\}. 
\end{eqnarray*}

Obviously,  $ncsc(L)\leq nsc(L)\leq sc(L)$, but also 
$ncsc(L)\leq csc(L)\leq sc(L)$.
Thus, non-deterministic finite cover automata can be considered to be one of 
the most compact representation of finite languages.

\section{Lower-bounds and Compression Ratio 
for NFCAs}
\label{slowerbounds}

We start this section analyzing few examples where nondeterminism, 
or the use of cover language, reduce the state complexity.
Let us first analyze the type of languages where non-determinism, combined with cover properties,
 reduce significantly the state complexity.

We choose the language 
 $L_{F_{m,n}}=\{a,b\}^{\leq m}a\{a,b\}^{n-2}$, where $m,n\in \N$.
In Figure~\ref{f1} we can see an NFA recognizing $L$ with $m+n$ states.
We must note that the longest word in the language 
has $m+n-1$ letters.
Let us analyze if the automaton in Figure~\ref{f1}  is minimal.
The fooling set technique, introduced in \cite{Chrobak} and \cite{gramlich}, and used to prove the lower-bound 
for state complexity of NFAs,  is  stated in \cite{birget,Chrobak} as follows:

\begin{lemma}
\label{lfst1}
Let $L\subseteq \Sigma^*$ be a regular language, and suppose there  exists a set of pairs 
$S=\{(x_i,y_i)\mid 1\leq i\leq n\}$, with the following properties:
\begin{enumerate}
 \item 
  \label{sfst}
  If $x_i y_i\in L$, for $1\leq i\leq n$ and $x_iy_j\notin L$, for all 
$1\leq i,j\leq n$, $i\neq j$, then $nsc(L)\geq n$. The set $S$ is called {\em a fooling set} for $L$.
 \item 
 \label{extfst}
If $x_i y_i\in L$, for $1\leq i\leq n$ and for $1\leq i,j\leq n$, if $i\neq j$, implies  
either $x_iy_j\notin L$ 
or $x_jy_i\notin L$, 
 then $nsc(L)\geq n$. The set $S$ is called {\em an extended fooling set} for $L$.
\end{enumerate}
\end{lemma}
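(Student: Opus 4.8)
The plan is to prove both parts by a single pigeonhole argument on the states of a size-optimal NFA. I would start by fixing an arbitrary NFA $A=(Q,\Sigma,\delta,q_0,F)$ with $L(A)=L$ and $\#Q=nsc(L)$, and then construct an injection from the index set $\{1,\ldots,n\}$ into $Q$. The existence of such an injection immediately yields $n\le\#Q=nsc(L)$, which is exactly the desired lower bound.

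To build the injection, I would use the acceptance of each diagonal word $x_iy_i\in L$. Fixing one accepting computation of $A$ on $x_iy_i$, let $p_i$ be the state this computation occupies just after reading the prefix $x_i$. By construction $p_i\in\delta(q_0,x_i)$ and $\delta(p_i,y_i)\cap F\neq\emptyset$. This defines a map $i\mapsto p_i$, and the whole lemma reduces to showing that this map is injective.

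For part~\ref{sfst}, suppose $p_i=p_j$ for some $i\neq j$. Then from $p_i=p_j\in\delta(q_0,x_i)$ we may continue by reading $y_j$ and, since $\delta(p_j,y_j)\cap F\neq\emptyset$, reach a final state; hence $\delta(q_0,x_iy_j)\cap F\neq\emptyset$, i.e.\ $x_iy_j\in L$, contradicting the fooling-set hypothesis. For part~\ref{extfst}, the same collision $p_i=p_j$ yields simultaneously $x_iy_j\in L$ (continue from $p_i$ with $y_j$) and, by the symmetric argument using $p_j=p_i\in\delta(q_0,x_j)$ with suffix $y_i$, also $x_jy_i\in L$; this contradicts the extended hypothesis that at least one of $x_iy_j,x_jy_i$ lies outside $L$. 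In either case the map is injective, completing the proof.

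I expect no serious obstacle, since this is the classical fooling-set argument; the only point requiring care is the nondeterministic bookkeeping. Because $A$ may admit several accepting runs on $x_iy_i$, one must commit to a single run before extracting $p_i$, and then verify that splicing the chosen $x_i$-prefix run to a chosen $y_j$-suffix run is itself a legitimate run of $A$. This splicing is justified directly by the inductive definition of $\delta(S,w)$ given in the preliminaries, so no additional machinery is needed.
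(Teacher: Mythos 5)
Your proof is correct and is essentially the argument the paper relies on: the paper states Lemma~\ref{lfst1} without proof (citing the literature), but its proof of the cover-automaton analogue, Lemma~\ref{lcfst}, uses exactly your construction --- pick a state $s_i\in\delta(q_0,x_i)$ on an accepting run of $x_iy_i$ with $\delta(s_i,y_i)\cap F\neq\emptyset$, and show $i\mapsto s_i$ is injective by deriving $x_iy_j\in L$ (and, for the extended variant, $x_jy_i\in L$) from any collision. No gaps; your remark about committing to a single accepting run and splicing via the inductive definition of $\delta(S,w)$ is the only bookkeeping point, and you handle it correctly.
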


Now consider the following set of pairs of words:
$S=\{(b^mab^j,b^{n-2-j})\mid 0\leq j\leq n-2\}\cup\{(a^i,b^{m-i}ab^{n-2})\mid 0\leq i\leq m\}=
\{(x_k,y_k)\mid 1\leq k\leq m+n\}$.

For $(x_k,y_k)\in S$, we have  that
\begin{enumerate}
 \item $x_ky_k=b^mab^jb^{n-2-j}=b^mab^{n-2}\in L$, or
 \item $x_ky_k=a^ib^{m-i}ab^{n-2}\in L$.
\end{enumerate}

Let us examine for each $1\leq k,h\leq m+n$, $k\neq h$ if  the words  
$x_ky_h$ and 
$x_hy_k$ are also in $L$. We have the following possibilities:
\begin{enumerate}
  \item Case I
  \begin{enumerate}
   \item $x_ky_h=b^mab^ib^{n-2-j}\notin L$, and
   \item $x_hy_k=b^mab^jb^{n-2-i}\notin L$.
  \end{enumerate}
 \item Case II
  \begin{enumerate}
   \item $x_ky_h=a^ib^{m-j}ab^{n-2}\in L$, if $i<j$, but 
   \item $x_hy_k=a^jb^{m-i}ab^{n-2}\notin L$, if $i<j$ (because $|a^jb^{m-i}ab^{n-2}|=m+n-1+j-i>m+n-1$).
  \end{enumerate}
 \item Case III
  \begin{enumerate}
   \item $x_ky_h=b^{m}ab^jb^{m-i}ab^{n-2}\notin L$ (because $|b^{m}ab^jb^{m-i}ab^{n-2}|=m+1+j+m+1+n-2>m+n-1$), or
   \item $x_hy_k=a^ib^{n-2-j}\in L$ if $n-2-j+1+i>n$, 
         or 
         $x_hy_k=a^ib^{n-2-j}\notin L$ $n-2-j+1+i<n$.
  \end{enumerate}
\end{enumerate}
From the statement \ref{extfst}. of Lemma~\ref{lfst1}, it follows that the NFA is minimal. 
We must note the following:
\begin{enumerate}
 \item we cannot use the weak form \ref{sfst} to prove the lower-bound;
 \item when proving the lower-bound, we concatenate words to obtain a word 
of length greater than the maximum length of the words in the language, and 
that's why $x_iy_j$ is rejected.
Since in case of cover automata such words will be automatically rejected, there is no doubt that any fooling set type 
technique we may use to prove the lower-bound for NFCAs must consider the length, and 
we should ignore the cases when the length exceeds the maximal one.
\end{enumerate}

Hence, the fooling set technique introduced in \cite{Chrobak} and \cite{gramlich}, and used to prove the lower-bound 
for state complexity of NFAs, can be modified to prove 
a lower-bound for minimal NFCAs, and it can be formulated  for cover languages
as an adaptation of Theorem~1 in \cite{GruberHolzerNFAHard}.
\begin{lemma}
\label{lcfst}
Let $L\subseteq \Sigma^{\leq l}$ be a finite language such that the longest word in $L$ has the length $l$, 
and suppose there  exists a set of pairs $S=\{x_i,y_i)\mid 1\leq i\leq n\}$, with the following properties:
\begin{enumerate}
 \item 
  \label{scfst}
  If $x_i y_i\in L$ for $1\leq i\leq n$ and 
for $1\leq i,j\leq n$, $i\neq j$, and $x_iy_j\in \Sigma^{\leq l}$, we have that 
$x_iy_j\notin L$, then $ncsc(L)\geq n$. 

The set $S$ is called {\em a fooling set} for $L$.
 \item 
 \label{cfstext}
If $x_i y_i\in L$, for $1\leq i\leq n$ and for $1\leq i,j\leq n$, if $i\neq j$, implies 
either $x_iy_j\in \Sigma^{\leq l}$ and $x_iy_j\notin L$, 
or $x_jy_i\in \Sigma^{\leq l}$ and $x_jy_i\notin L$ for all, 
 then $ncsc(L)\geq n$. 

The set $S$ is called {\em an extended fooling set} for $L$.
\end{enumerate}
\end{lemma}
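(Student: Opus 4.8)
The plan is to adapt the classical fooling-set argument, the only genuinely new ingredient being the bookkeeping forced by the fact that a cover automaton for $L$ recognizes a cover \emph{language} rather than $L$ itself. First I would fix an arbitrary NFA $A=(Q,\Sigma,\delta,q_0,F)$ that is a cover automaton for $L$, so that $L(A)\cap\Sigma^{\leq l}=L$, and aim to show $\#Q\geq n$; taking the minimum over all such $A$ then gives $ncsc(L)\geq n$. For each $i$, since $x_iy_i\in L\subseteq L(A)$ there is an accepting computation of $A$ on $x_iy_i$, so I may select a state $p_i\in\delta(q_0,x_i)$ with $\delta(p_i,y_i)\cap F\neq\emptyset$.

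The heart of the argument is to show that $p_1,\dots,p_n$ are pairwise distinct, which at once yields $\#Q\geq n$. Suppose instead $p_i=p_j$ for some $i\neq j$. Reading the two computations through the common state shows that $A$ accepts both crossed words: from $p_i\in\delta(q_0,x_i)$ together with $\delta(p_j,y_j)\cap F\neq\emptyset$ one obtains $x_iy_j\in L(A)$, and symmetrically $x_jy_i\in L(A)$.

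The delicate point, and the step I expect to be the main obstacle, is that membership in $L(A)$ only forces membership in $L$ for words of length at most $l$, so the classical contradiction ``$x_iy_j\notin L$'' is not immediately available for long crossed words. For the extended version (part~\ref{cfstext}) this is handled directly by the hypothesis: one of its two disjuncts guarantees, say, $x_iy_j\in\Sigma^{\leq l}$ with $x_iy_j\notin L$, whence $x_iy_j\in L(A)\cap\Sigma^{\leq l}=L$, a contradiction (the other disjunct being symmetric). For the plain version (part~\ref{scfst}) I would supply the missing length control by the identity
\[
|x_iy_j|+|x_jy_i|=|x_iy_i|+|x_jy_j|\leq 2l,
\]
valid because $x_iy_i,x_jy_j\in L\subseteq\Sigma^{\leq l}$. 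This forces at least one crossed word, say $x_iy_j$, to lie in $\Sigma^{\leq l}$; the plain hypothesis then gives $x_iy_j\notin L$, once more contradicting $x_iy_j\in L(A)\cap\Sigma^{\leq l}=L$.

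In either case the assumption $p_i=p_j$ is untenable, so the states $p_1,\dots,p_n$ are distinct and $\#Q\geq n$; as $A$ was arbitrary, $ncsc(L)\geq n$, establishing both parts. The essential difference from the regular-language case is thus the length identity above, which explains why both statements must restrict attention to crossed words in $\Sigma^{\leq l}$, and why the plain form still suffices even though it constrains only the short crossed words.
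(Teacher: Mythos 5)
Your proof is correct and follows the same route as the paper's own argument: fix a cover NFA $A$ with $L(A)\cap\Sigma^{\leq l}=L$, pick a state $p_i\in\delta(q_0,x_i)$ on an accepting computation of $x_iy_i$, and show the $p_i$ are pairwise distinct because a collision would force both crossed words into $L(A)$, which for a crossed word of length at most $l$ contradicts the fooling-set hypothesis. The one place you go beyond the paper is the identity $|x_iy_j|+|x_jy_i|=|x_iy_i|+|x_jy_j|\leq 2l$, which guarantees that at least one crossed word lies in $\Sigma^{\leq l}$. The paper's proof of part~1 simply asserts ``because $|x_jy_i|\leq l$'' and then separately treats the case $|x_iy_j|\leq l$, without ever excluding the possibility that both crossed words exceed $l$ --- a case in which the hypothesis of part~1 is silent about the pair and the contradiction would not be available. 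Your length identity closes that gap cleanly and shows the plain fooling-set statement holds exactly as written; it is a genuine (if small) improvement on the argument given in the paper.
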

\begin{proof}
Assume there exists an NFCA $A=(Q,\Sigma,\delta,q_0,F)$, with $m$ states accepting $L$.
For each $i$, $1\leq i\leq n$, $x_iy_i \in L$, therefore we must have a state $s_i \in\delta(q_0,x_i)$ 
and $\delta(s_i, y_i) \cap F\neq \emptyset$.
In other words, there exists a state $f_i \in F$ and $f_i \in \delta(s_i, y_i)$. 
\begin{enumerate}
 \item 
We claim  $s_i\notin \delta(q_0,x_j)$ for all $j \neq i$.
If $s_i \in \delta(q_0,x_j)$,
then $f_i \in \delta(s_i,y_i) \subseteq \delta(q_0,x_jy_i)$, and because $|x_jy_i|\leq l$, it follows that 
  $x_jy_i \in  L$, a contradiction.
\item 
We consider the function $f:\{1,\ldots,n\}\longrightarrow Q$ 
defined by $f(i)=s_i$, $s_i$ as above.
We claim that $f$ is injective.
If $f(i)=f(j)$, then
$\delta(f(i),y_i)=\delta(f(j),y_i)$,
also 
$\delta(f(j),y_j)=\delta(f(i),y_j)$.
Because $\delta(f(i),y_i)\cap F\neq \emptyset$, we also have that $\delta(f(j),y_i)\cap F\neq \emptyset$,
and because $|x_iy_j|\leq l$, it follows that $x_iy_j\in L$, a contradiction.
If $|x_jy_i|\leq l$, using the same reasoning, will follow that $x_jy_i\in L$.
In both cases we have a contradiction, thus $Q$ must have at least $n$ elements.$\Box$
\end{enumerate}
\end{proof}

For the example above, we discover that we cannot have more than one pair 
of the form $(a^i,b^{m-i}ab^{n-2})$, thus, applying the extended fooling set technique for NFCAs,
 the minimum number of states in a minimal 
NFCA is at least $n-2+1+1=n$. 
This proves that the NFCA presented in Figure~\ref{f2} is minimal.

It is easy to check that any two distinct words $w_1,w_2\in \Sigma^{\leq n-1}$, $w_1\neq w_2$, 
are not similar with respect to $\sim_L$. It follows that for the language presented in 
Figure~\ref{f1}, $csc(L)\geq 2^{n-1}$. One can also verify that for two distinct words
$uay$ and $wax$,
if  $|y|\neq |x|$, $|x|,|y|\leq n-2$, they are distinguishable;
also, in case 
$|x|=|y|\leq n-2$, the word $a^{n-2-|x|}$ will distinguish between all the words for which
$|u|<n-2-|x|$ or $|w|<n-2-|x|$, thus the number of states in the minimal DFA is even
 larger than  $csc(L)$.
In case $m=2$  and $n=4$, the minimal DFCA is presented in Figure~\ref{f3}. A simple 
computation shows us that the corresponding minimal DFA has 15 states.

\begin{figure}[h]
\begin{center}
\begin{picture}(200,70)
\put(10,20){\oval(15,15)}
\put(7,17){$0$}

\put(10,50){\oval(18,15)}
\put(3,47){$-1$}
\put(44,50){\vector(-1,0){24}}
\put(25,55){$a,b$}
\put(10,43){\vector(0,-1){15}}
\put(-10,30){$a,b$}
\put(14,43){\vector(2,-1){36}}
\put(35,35){$a$}

\put(18,20){\vector(1,0){30}}
\put(25,25){$a$}
\put(56,20){\oval(15,15)}
\put(54,17){$1$}    

\put(54,50){\oval(18,15)}
\put(46,47){$-2$}    
\put(94,50){\vector(-1,0){30}}
\put(75,55){$a,b$}
\put(54,43){\vector(0,-1){15}}
\put(56,35){$a$}

\put(164,50){\oval(22,15)}
\put(156,47){$-m$}    
\put(185,50){\vector(-1,0){10}}
\put(152,50){\vector(-1,0){30}}
\put(125,55){$a,b$}
\put(152,50){\vector(-4,-1){92}}
\put(90,38){$a$}

\put(64,20){\vector(1,0){30}}
\put(76,25){$a,b$}
\put(102,20){\oval(15,15)} 
\put(98,17){$2$} 
\put(109,20){\vector(1,0){20}}
\put(112,25){$a,b$}
\put(145,20){\vector(1,0){20}}
\put(145,25){$a,b$}
\put(180,20){\oval(30,15)}
\put(170,17){$n-2$}
\put(195,20){\vector(1,0){25}}
\put(200,25){$a,b$}
\put(235,20){\oval(25,15)}
\put(224,17){$n-1$}    
\put(235,20){\oval(30,18)}  
\end{picture}
\end{center}
\caption{An NFA with $m+n$ states for the language $L_{F_{m,n}}=\{a,b\}^{\leq m}a\{a,b\}^{n-2}$.\newline
}
\label{f1}
\end{figure}
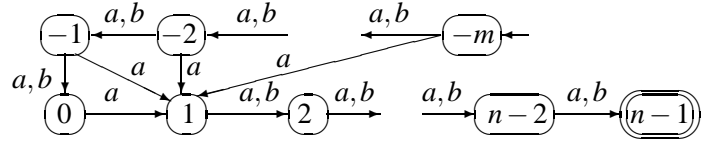

\begin{figure}[h]
\begin{center}
\begin{picture}(200,70)
\put(-5.5,30){\vector(1,0){5}}
\put(7,30){\oval(15,15)}
\put(5,27){0}
\put(7,49){\oval(15,15)[t]}
\put(-0.5,49){\line(1,-3){4}}
\put(14.5,49){\vector(-1,-3){4}}
\put(12,58){$a,b$}
\put(15,30){\vector(1,0){30}}
\put(25,35){$a$}
\put(52,30){\oval(15,15)}
\put(50,27){1}    
\put(60,30){\vector(1,0){30}}
\put(65,35){$a,b$}
\put(97,30){\oval(15,15)} 
\put(95,27){2} 
\put(105,30){\vector(1,0){20}}
\put(110,35){$a,b$}
\put(145,30){\vector(1,0){20}}
\put(145,35){$a,b$}
\put(180,30){\oval(30,15)}
\put(170,27){$n-2$}
\put(195,30){\vector(1,0){25}}
\put(200,35){$a,b$}
\put(235,30){\oval(29,15)}
\put(225,27){$n-1$}    
\put(235,30){\oval(32,18)}  
\end{picture}
\end{center}
\caption{An NFCA with $n$ states for the language $L_{F_{m,n}}=\{a,b\}^{\leq m}a\{a,b\}^{n-2}$, that is the same as 
the one in Figure~\ref{f1}. 
In case $m=2$ and $n=4$, the language is the same as the one described in Figure~\ref{f3}.\newline
An equivalent minimal NFA has $m+n$ states.
}
\label{f2}
\end{figure}
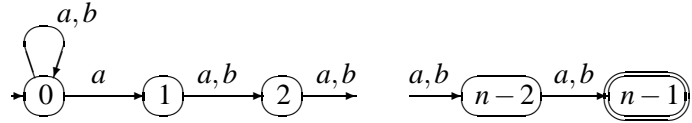

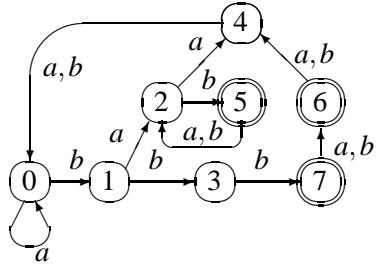
\begin{figure}[h]
\begin{center}
\begin{picture}(200,100)
\put(10,30){\oval(15,15)}
\put(7,27){0}
\put(18,30){\vector(1,0){15}}
\put(25,35){$b$}

\put(10,13){\oval(15,15)[b]}
\put(17,13){\vector(-1,2){5}}
\put(3,13){\line(1,2){5}}
\put(12,0){$a$}

\put(40,30){\oval(15,15)}
\put(37,27){1}    

\put(46,35){\vector(1,2){9}}
\put(40,45){$a$}

\put(60,60){\oval(15,15)} 
\put(57,57){2} 

\put(68,60){\vector(1,0){15}}
\put(75,65){$b$}

\put(75,48){\oval(30,10)[b]}
\put(60,48){\vector(0,1){5}}
\put(90,48){\line(0,1){5}}
\put(68,46){$a,b$}

\put(66,66){\vector(1,1){18}}
\put(70,80){$a$}

\put(90,90){\oval(15,15)} 
\put(87,87){4} 
\put(90,90){\oval(15,15)} 

\put(83,90){\line(-1,0){50}}
\put(50,50){\oval(80,80)[tl]}
\put(10,50){\vector(0,-1){12}}
\put(15,70){$a,b$}

\put(90,60){\oval(15,15)}
\put(90,60){\oval(18,18)}  
\put(87,57){5}

\put(48,30){\vector(1,0){24}}
\put(55,35){$b$}

\put(80,30){\oval(15,15)}
\put(77,27){$3$}

\put(88,30){\vector(1,0){24}}
\put(95,35){$b$}

\put(120,30){\oval(15,15)}
\put(117,27){$7$}    
\put(120,30){\oval(18,18)}  

\put(120,40){\vector(0,1){10}}
\put(125,40){$a,b$}

\put(120,60){\oval(15,15)}
\put(117,57){$6$}    
\put(120,60){\oval(18,18)}  

\put(115,68){\vector(-1,1){18}}
\put(110,75){$a,b$}

\end{picture}
\end{center}
\caption{A minimal DFCA with $8$ states for the language $L_{F_{2,4}}=\{a,b\}^{\leq 2}a\{a,b\}^{2}$, $l=5$.\newline
The equivalent minimal DFA has $15$ states.}
\label{f3}
\end{figure}

This language example shows that NFCAs may be a much more compact representation 
for finite languages than NFAs, or even DFCAs, and motivates the study of such objects.
In terms of compression, clearly the number of states in the NFCA 
is exponentially smaller than the number of states in the DFA, and in some 
cases, even exponentially smaller than in an NFA.

\begin{figure}
 \begin{center}
\begin{picture}(200,50)
 \put(2,40){\vector(1,0){5}}
 \put(15,40){\oval(15,15)}
 \put(13,37){$0$}
 \put(23,40){\vector(1,0){24}}
 \put(30,45){$a$}
 \put(55,40){\oval(15,15)}
 \put(53,37){$1$}
 \put(55,28){\vector(0,1){5}}
 \put(63,40){\vector(1,0){20}}
 \put(70,45){$a$}
 \put(92,40){\oval(15,15)}
 \put(92,40){\oval(18,18)}
 \put(90,37){$2$}
 \put(101,40){\vector(1,0){20}}
 \put(105,45){$a$}
 \put(135,40){\vector(1,0){12}}
 \put(140,45){$a$} 
 \put(161,40){\oval(25,15)}
 \put(161,40){\oval(28,18)}
 \put(150,37){$k-1$}
 \put(175,40){\vector(1,0){25}}
 \put(180,45){$a$}

 \put(210,40){\oval(20,15)}
\put(210,40){\oval(22,18)}
 \put(207,37){$k$}
 \put(133,30){\oval(156,20)[b]}
 \put(133,22){$a$}
\put(211,27){\line(0,1){5}}
\end{picture}
\end{center}
\caption{An NFA/NFCA $A_k$ for $L_{l,k}$.} 
\label{f4}
\end{figure}
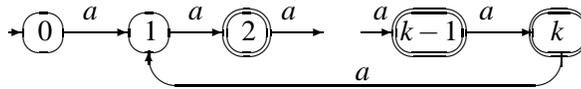

Let's set $\Sigma=\{a\}$, $l>k\geq 2$,
and choose the following language:
\begin{equation}
\label{Lprime}
L_{l,k}=a(\Sigma^{\leq l}-\{(a^k)^n\mid n\geq 0\}).  
\end{equation}

In Figure~\ref{f4},  the NFCA $A_{k}$ accepts the language $L_{l,k}$, therefore
 $ncsc(L_{l,k})\leq csc(L_{l,k})\leq sc(L_{l,k})\leq \min(l+1,k+1)=k+1$.
It is known \cite{Chrobak,holzerKutribUnary,pighizzini} 
that the automaton  $A_k$ is minimal NFA for $\displaystyle\bigcup_{l\in\N} L_{l,k}$, if $k$ is a prime number.
However, this may not be a minimal NFCA, as illustrated by the example in Figure~\ref{f5}, where $A_7$ 
is not minimal for $L_{9,7}$, even if it is minimal NFA for the cover language.

\begin{figure}
\begin{center}
\begin{picture}(150,75)
 \put(4,20){\vector(1,0){5}}
 \put(15,20){\oval(12,12)}
 \put(13,17){$0$}
 \put(22,20){\vector(1,0){25}}
 \put(30,25){$a$}
 \put(20,23){\vector(1,1){28}}
 \put(30,40){$a$}
 \put(54,52){\oval(12,12)}
 \put(52,50){$3$}
 \put(60,52){\vector(1,0){25}}
 \put(65,55){$a$}
 \put(92,52){\oval(12,12)}
 \put(92,52){\oval(15,15)}
 \put(90,50){$4$}
 \put(100,52){\vector(1,0){25}}
 \put(110,55){$a$}
 \put(132,52){\oval(12,12)}
 \put(132,52){\oval(15,15)}
 \put(130,50){$5$}
\put(92,60){\oval(80,15)[t]}
 \put(90,70){$a$}
\put(132,58){\line(0,1){5}}
\put(52,63){\vector(0,-1){5}}

 \put(55,20){\oval(12,12)}
 \put(53,17){$1$}
 \put(55,8){\vector(0,1){5}}
 \put(61,20){\vector(1,0){25}}
 \put(70,25){$a$}
 \put(92,20){\oval(12,12)}
 \put(92,20){\oval(15,15)}
 \put(90,17){$2$}

 \put(75,10){\oval(40,15)[b]}
 \put(75,5){$a$}
\put(95,9){\line(0,1){5}}
\end{picture}
\hspace*{2cm}
\begin{picture}(150,75)(20,0)
 \put(4,20){\vector(1,0){5}}
 \put(15,20){\oval(12,12)}
 \put(13,17){$0$}
 \put(22,20){\vector(1,0){20}}
 \put(25,25){$a$}
 \put(48,20){\oval(12,12)}
 \put(45,17){$1$}
 \put(48,8){\vector(0,1){5}}
 \put(54,20){\vector(1,0){20}}
 \put(60,25){$a$}
 \put(82,20){\oval(12,12)}
 \put(82,20){\oval(15,15)}
 \put(80,17){$2$}
 \put(88,20){\vector(1,0){20}}
 \put(105,25){$a$}
 \put(125,20){\vector(1,0){14}}
 \put(130,25){$a$} 
 \put(145,20){\oval(12,12)}
 \put(145,20){\oval(15,15)}
 \put(142,17){$6$}

 \put(97,10){\oval(98,15)[b]}
 \put(133,5){$a$}
\put(146,9){\line(0,1){5}}
\end{picture}
\end{center}
\caption{A minimal NFCA for $L_{9,7}$, left, and a  minimal NFA for a cover language, right.} 
\label{f5}
\end{figure}
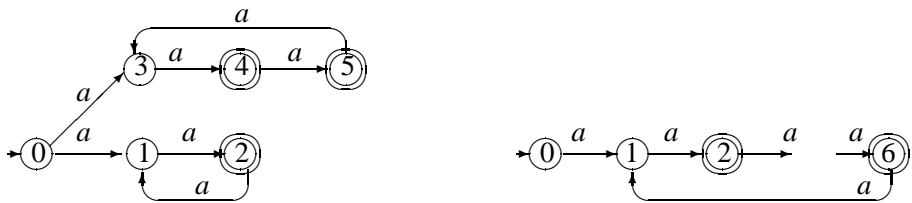

We apply the extended fooling set technique for the language $L_{l,k}$. 
Because the alphabet is unary, all the  words in an extended fooling
 set $S$  are powers of $a$: 
$S\supseteq\{(a^{i_1},a^{j_1}),(a^{i_2},a^{j_2}),(a^{i_3},a^{j_3}),\ldots , (a^{i_r},a^{j_r})\}$, 
for some $r\in \N$.
A simple computation shows that if $i_1,\ldots, i_r>1$, and $i_1+j_2 = z_{12}k+1$ and  $i_1+j_3 = z_{13}k+1$ 
for some $z_{12},z_{13}\in \N$, then $i_2+j_3\neq z_{23}k+1$ and 
$i_3+j_2\neq z_{32}k+1$, for any $z_{23},z_{32}\in\N$.
It follows that $r\leq 3$. 

Let $A$ be an NFA accepting $L\supseteq L_k$, and we can consider that it is already in Chrobak normal form, 
as it is ultimately periodic.
Thus, for each $L$, $nsc(L)\geq p_1+\ldots p_s$, where $p_i$ are primes, and each cycle has $p_i^{k_i}$ states, 
$1\leq i\leq s$. 
Now, let us prove that $A_k$ is minimal for some language $L_{l,k}$, $l\geq k$.

Assume there exists an automaton $B=(Q_B,\Sigma,\delta_B,q_{0,B},F_B)$ with $m$ states, $m\leq k+1$ such that
 $L(B)=L_{l,k}$.
It follows that the language $L(B)$ will contain words 
with a length $x+hy$ for $x,y\leq k$, and all $h\in \N$. For $h$ large enough, one of these words will be 
of length multiple of $k$ plus $1$, therefore, for large enough $l$, i.e., greater than some $l_0$,
$L_{l,k}\neq L(B)$. Thus, the number of states in $B$ is at least $k$.
$A_k$ is also a minimal NFCA for languages $L_{l,k}$, $l\geq l_0$, hence it follows that Theorem~7 in 
\cite{GruberHolzerNFAHard} is also valid for cover automata:

\begin{theorem}
 There is a sequence of languages $(L_{l,k})_{k\geq 2}$ such that the nondeterministic cover complexity
of $L_{l,k}$ is at least $k$, but the extended fooling set for $L_{l,k}$ is of size $c$,
 where $c$ is a constant.
\end{theorem}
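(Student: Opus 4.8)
The plan is to let $k$ range over the primes $\ge 2$ and, for each such $k$, to fix $l=l(k)$ above a threshold $l_0(k)$ specified below; the sequence $(L_{l(k),k})_{k}$ of the unary languages defined in (\ref{Lprime}) will then do the job. The statement is really the conjunction of two independent estimates for $L_{l,k}$: a lower bound $ncsc(L_{l,k})\ge k$ coming from the cyclic structure of unary automata, and an upper bound of the form ``every extended fooling set has at most $c$ elements'' coming from a residue computation modulo $k$. Since the second quantity is the best lower bound the technique of Lemma~\ref{lcfst} can ever produce, the two estimates together exhibit a gap of the unbounded factor $k/c$, which is exactly the non-optimality asserted.

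For the lower bound I would reduce the cover question to eventual periodicity. Put any candidate NFCA $B$ for $L_{l,k}$ into Chrobak normal form: a tail of length $T=O((\#Q_B)^2)$ followed by disjoint cycles of lengths $c_1,\dots,c_s$ with $\sum_i c_i\le \#Q_B$. Beyond the tail the accepting-length indicator of $L(B)$ is periodic with period $P=\mathrm{lcm}(c_1,\dots,c_s)$. The target pattern, ``$a^p$ is accepted iff $p\not\equiv 1\mod k$'' for $2\le p\le l'$ with $l'$ the length of the longest word of $L_{l,k}$, has minimal period exactly $k$, hence fails to be $P$-periodic whenever $k\nmid P$. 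I would then argue: if $\#Q_B\le k-1$, every $c_i<k$, so $k\nmid P$ because $k$ is prime; choosing $l$ (hence $l'$) larger than $T+2P$ forces the periodic part of $B$ to reproduce, on a full extra period inside $[2,l']$, a word pattern that is simultaneously $P$-periodic and not $P$-periodic, a contradiction. Thus $\#Q_B\ge k$, i.e. $ncsc(L_{l,k})\ge k$.

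For the bound on fooling sets I would pass entirely to residues modulo $k$. Since $\Sigma=\{a\}$ is unary, an extended fooling set is $S=\{(a^{i_t},a^{j_t})\mid 1\le t\le r\}$, and by Lemma~\ref{lcfst}(\ref{cfstext}) it satisfies $i_t+j_t\not\equiv 1\mod k$ for every $t$ (the diagonal words lie in $L_{l,k}$), while for every $s\ne t$ at least one of $i_s+j_t$, $i_t+j_s$ is $\le l$ and $\equiv 1\mod k$ (the off-diagonal words are rejected inside the range). Introduce the digraph on $\{1,\dots,r\}$ with an arc $s\to t$ precisely when $i_s+j_t\le l$ and $i_s+j_t\equiv 1\mod k$. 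The diagonal condition says there are no loops, and the off-diagonal condition says every unordered pair carries at least one arc. The heart of the matter is the modular computation already isolated before the statement: if $s\to t_1$ and $s\to t_2$ then $j_{t_1}\equiv j_{t_2}\equiv 1-i_s\mod k$, and an arc in either direction between $t_1$ and $t_2$ would force $i_{t_1}+j_{t_1}\equiv 1$ or $i_{t_2}+j_{t_2}\equiv 1\mod k$, contradicting the loop-free (diagonal) condition. Hence no vertex has out-degree exceeding $1$, so the digraph has at most $r$ arcs; since it must carry at least $r(r-1)/2$ arcs, we get $r(r-1)/2\le r$, that is $r\le 3$. The finitely many pairs with $i_t\le 1$ contribute only an additive constant, so every extended fooling set has size at most an absolute constant $c$.

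Combining the two bounds, for each prime $k$ and each $l\ge l_0(k)$ the language $L_{l,k}$ satisfies $ncsc(L_{l,k})\ge k$ yet admits no extended fooling set of size larger than $c$, which is the failure of optimality claimed. I expect the residue computation of the third paragraph to be routine; the delicate step is the lower bound, namely transferring the classical minimality of $A_k$ as an NFA for the infinite language $\bigcup_l L_{l,k}$ (which genuinely uses that $k$ is prime) to the finite cover setting, where agreement is only demanded up to length $l'$. Bounding the tail length $T$ of the Chrobak normal form and then choosing $l_0(k)$ beyond $T$ plus a couple of full periods is what guarantees that the truncation to $\Sigma^{\le l'}$ still exposes the period mismatch; this quantification of ``$l$ large enough'' is the crux of the whole argument.
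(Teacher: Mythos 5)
Your proposal is correct and follows essentially the same route as the paper: the lower bound $ncsc(L_{l,k})\geq k$ via the cyclic/periodic structure of unary NFAs (Chrobak normal form plus the primality of $k$, with $l$ taken beyond a threshold $l_0(k)$), and the constant bound on extended fooling sets via the residue computation modulo $k$ showing $r\leq 3$ up to the finitely many degenerate pairs. Your write-up is in fact more explicit than the paper's (the out-degree/digraph counting makes precise the paper's ``simple computation,'' and you correctly identify the quantification of ``$l$ large enough'' as the delicate point that the paper also only sketches), but the underlying argument is the same.
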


Now, we are ready to check how hard is to obtain this minimal representation
of a finite language.

\section{Minimization Complexity}
\label{shard}

In this section we show that minimizing NFCAs is hard, and we'll show it with 
the exact same arguments from \cite{gruberholzerunary},
used to prove that minimizing NFAs is hard.
We will describe the construction from \cite{gramlich,gruberholzerunary}, showing 
that we can also use it with only a minor addition for cover NFAs.
To keep the paper self contained, we include a complete description,  
and emphasize the changes required for the cover automata, rather 
than just presenting the differences. 

Let us consider a logical formula $F \in 3SAT$, 
 in the conjunctive normal form, i.e., $F=\displaystyle\bigwedge_{i=1}^mC_i$,
where each clause $C_i$, $1\leq i\leq m$,
is defined using variables $x_1,\ldots,x_n$, $C_i=u_1\vee u_2\vee u_3$, and 
each $u_j$, $1\leq j\leq 3$ are either $x_i$ or $\neg x_i$.
Let $p_1,p_2,\ldots,p_n$ be distinct prime numbers such that
$p_1<p_2<\ldots<p_n$. We set $k=\prod_{i=1}^np_i$, and 
using Chinese Remainder Theorem \cite{china}\footnote{Theorem I.3.3, page 21},
it follows that the function $f:\Z_k\longrightarrow \prod_{i=1}^n\Z_{p_i}$ is bijective.
We need to define a language $L_F$ and a natural number $l$ such that
$L_F=\{a\}^*$, if and only if  $F$ is unsatisfiable, therefore, the finite language 
$L_F\cap \Sigma^{\leq l}$ has $\{a\}^*$ as a cover language.
We can construct an automaton $B_i$ in $O( p_n)$ in a similar fashion as we build automata $A_k$
that recognizes the language
$L(B_i)=\{a^n \mid n 
\mod 
p_i\notin\{0,1\}\}$.
Let $B$ be an automaton recognizing $\bigcup_{i=1}^nL(B_i)$. 
It is clear that it can be constructed in $O(n\cdot p_n)$ time.
For each clause $C_i$ such that $a_1,a_2,a_3$ is an assignment of its 
variables  for which 
$C_i$ is not satisfied, we define $L_{C_i}=\cap_{i=1}^3\{a^n\mid n \mod p_i =a_i\}$.
An automaton $C_i$ accepting  $L_{C_i}$ can be constructed 
in $O(p_n^3)$ time\footnote{Using Cartesian product construction, for example.}. 
Setting $L_F=\bigcup_{i=1}^m L_{C_i}\cup L(B)$, it follows that 
$L_F=\{a\}^*$ iff $F$ is satisfiable. Moreover, $L_F$ is a cyclic language with period at most $k$,
 thus setting $l=k$, we have that 
$L_F\cap \{a\}^{\leq l}$ has $\{a\}^*$
as a cover language iff
 $F$ is satisfiable. 
Since according to \cite{Primtest}, primality test can be done in polynomial time, 
 we can find the first $n$ prime numbers  in polynomial time, 
which means that our NFA construction can also be done in polynomial time.
If $F$ is unsatisfiable, then $ncsc(L)=1$, if $F$ is satisfiable,
then the  minimal period of $L_F$ is $\frac{l}{2}$, according to \cite{Chrobak,gramlich}, 
 and the minimal number of states 
in an NFA is at least equal to the largest prime number dividing its period, which is $p_n$. 
Using  the same argument as in \cite{gruberholzerunary}, it 
follows that the existence of a polynomial algorithm to decide if  $ncsc(L)=o(n)$ implies that 
$nsc(L)=o(n)$, therefore we can solve $3SAT$ in polynomial time, i.e., $P=NP$.
Consequently, we proved that
\begin{theorem}
 Minimizing either NFCAs or $l$-NFCAs is at least NP-hard. 
\end{theorem}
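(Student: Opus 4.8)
The plan is to reduce $3SAT$ to NFCA minimization, following the Gruber--Holzer argument for plain NFAs and then checking that the cover semantics does not weaken the resulting lower bound. First I would encode truth assignments arithmetically: given $F=\bigwedge_{i=1}^m C_i$ over variables $x_1,\ldots,x_n$, I take the first $n$ primes $p_1<\cdots<p_n$ (computable in polynomial time by a polynomial primality test), set $k=\prod_{i=1}^n p_i$, and exploit the Chinese Remainder isomorphism $\Z_k\cong\prod_{i=1}^n\Z_{p_i}$ so that a residue $n\bmod k$ whose $i$-th component lies in $\{0,1\}$ corresponds to assigning $x_i$ the value $n\bmod p_i$. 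This makes the unary exponent $n$ a carrier for a full truth assignment.

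Next I would build, over $\Sigma=\{a\}$ and in polynomial time, a unary NFA for $L_F=\bigcup_{i=1}^m L_{C_i}\cup L(B)$, where $L(B)=\bigcup_i\{a^n\mid n\bmod p_i\notin\{0,1\}\}$ discards every $n$ that is not a genuine $\{0,1\}$-assignment, and each $L_{C_i}$ captures exactly the unique residue pattern that falsifies clause $C_i$. The defining property to verify is that the complement of $L_F$ within $\{a\}^*$ consists precisely of those $n$ encoding a satisfying assignment, so that $L_F=\{a\}^*$ if and only if $F$ is unsatisfiable.

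I would then pass to the cover setting. Since $L_F$ is purely periodic with period dividing $k$, setting $l=k$ makes $\{a\}^*$ a cover language for the finite language $L_F\cap\Sigma^{\le l}$ exactly when $F$ is unsatisfiable. This yields the gap I need: if $F$ is unsatisfiable the cover language is $\{a\}^*$ and $ncsc(L_F\cap\Sigma^{\le l})=1$, whereas if $F$ is satisfiable the minimal period is nontrivial and divisible by a prime of order $p_n=\Theta(n\log n)$, so the unary lower bound via Chrobak normal form forces every NFA --- hence every NFCA --- to use at least $p_n$ states. Consequently a polynomial-time NFCA minimizer (or even a polynomial-time test of whether $ncsc$ lies below a suitable threshold) would distinguish the two cases and thus solve $3SAT$, giving $P=NP$; since $l=k$ is fixed by the instance, the identical argument covers $l$-NFCAs.

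The step I expect to be the crux is verifying that moving from $L_F$ to $L_F\cap\Sigma^{\le l}$ under a cover language does not allow a satisfiable instance to be represented by a small NFCA that ``cheats'' on words longer than $l$. This is exactly where the choice $l=k$ matters: a satisfying witness $a^{n_0}$ can be taken with $n_0\le l$, so any NFCA for $L_F\cap\Sigma^{\le l}$ must correctly reject it, and the periodicity lower bound that applies to plain NFAs transfers verbatim to NFCAs of threshold $l$. Making this transfer rigorous --- rather than the arithmetic encoding, which is the same as in the NFA reduction --- is the only place the cover-automaton setting demands care beyond the original argument.
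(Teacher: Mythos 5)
Your proposal follows essentially the same route as the paper: the same Gruber--Holzer-style reduction from $3SAT$ via the first $n$ primes, the Chinese Remainder encoding of assignments, the language $L_F=\bigcup_{i=1}^m L_{C_i}\cup L(B)$ with $L_F=\{a\}^*$ iff $F$ is unsatisfiable, the choice $l=k$, and the gap between $ncsc=1$ and a lower bound of order $p_n$ driven by the minimal period. The one step you defer --- showing that a small NFCA cannot ``cheat'' by accepting a convenient cover language --- is also the step the paper settles only by appeal to the unary pumping argument it gives earlier (a sub-$k$-state unary NFA agreeing with the target on all words of length at most $k$ must already err on some short word), so your plan matches the paper's proof in both structure and level of detail.
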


\section{Reducing the Number of States of NFCAs}
\label{sheuristic}

Assume the DFA $A=(Q,\Sigma,\delta,q_0,F)$ is minimal for $L$, and the minimal NFA is
$A'=(Q',\Sigma,\delta',q_0,F)$, 
where $Q'=Q-\{d\}$, $\delta'(s,p)=\delta(s,p)$, if $\delta(s,p\in Q')$ and $\delta'(s,p)=\emptyset$ if $\delta(s,p)=d$.
In other words, the minimal NFA is the same as the DFA, except that we delete the dead state.
We may have a minimal DFCA as $A$, and $A'$ as a minimal NFA, but not as a minimal NFCA,  as illustrated by
$A_7$ and $L_{9,7}$.

We need to investigate if classical methods to reduce the number of states in an NFA or DFA/DFCA
can also be applied to NFCAs, thus, we first analyze the state  merging technique.
For NFAs,  we distinguish between  two main ways of merging states: (1) a weak method,
where two states are merged by simply collapsing one into the other, and consolidate
all their input and output transitions, and (2), a strong method, where one state is
merged into another one by redirecting its input transitions toward the other state,
and completely deleting it and all its output transitions. 
The same methods are considered for NFCAs.

\begin{definition}
 Let $A=(Q,\Sigma,\delta,q_0,F)$ be a NFCA for the finite language $L$. 
\begin{enumerate}
 \item We say that the state $q$ is {\em weakly mergible} in state $p$
if the automaton
$A'=(Q',\Sigma,\delta',q_0,F')$, where
$Q'=Q-\{q\}$, $F'=F\cap Q'$, and 
$$
\delta(s,a)=\left\{\begin{array}{ll}
                      \delta(s,a),&\mbox{ if  }\delta(s,a)\subseteq Q'\mbox{ and }s\neq p,\\
                      (\delta(s,a)\setminus\{q\})\cup \{p\},&\mbox{ if  }q\in \delta(s,a)\mbox{ and }s\neq p,\\
                       (\delta(s,a)\cup\delta(q,a))\setminus\{q\},&\mbox{ if  }s=p\\
                     \end{array}
\right.
$$
 is also a NFCA for $L$.
In this case we write $p\precapprox q$.
\item We say that the state $q$ is {\em strongly mergible} in state $p$, 
if the automaton
$A'=(Q',\Sigma,\delta',q_0,F')$, where
$Q'=Q-\{q\}$, $F'=F\cap Q'$, and 
$$
\delta(s,a)=\left\{\begin{array}{ll}
                      \delta(s,a),&\mbox{ if  }\delta(s,a)\subseteq Q'\\
                      (\delta(s,a)\setminus\{q\})\cup \{p\},&\mbox{ if  }q\in \delta(s,a),
                     \end{array}
\right.
$$
 is also a NFCA for $L$.
In this case we write $p\precsim q$.
\end{enumerate}
\end{definition}
In case $p\precapprox q$, 
$(L^L_pL^R_p\cup L^L_pL^R_q\cup L^L_qL^R_p\cup L^L_qL^R_q) \cap \Sigma^{\leq l}\subseteq L$
and in case $p\precsim q$, 
$L^L_qL^R_q\cap \Sigma^{\leq l}\subseteq 
(L^L_pL^R_p\cup L^L_qL^R_p) \cap \Sigma^{\leq l} \subseteq L$, where for $s\in Q$
$L^L_s=\{w\in \Sigma^*\mid s\in \delta(q_0,w)\}$ and
$L^R_s=\{w\in \Sigma^*\mid \delta(s,w)\cap F\neq \emptyset\}$.

For the case of DFCAs, if $A$ is a DFCA for $L$ and two states are similar with respect
 to the similarity relation induced by $A$, then all the words reaching these states are similar.
Moreover, if two words of minimal length reach two distinct states in a DFCA, and the words are similar
with respect to $L$, then the states in the DFCA must be similar with respect to the similarity 
relation induced by $A$. These results are used  for DFCA minimization, and we need to verify 
if they can be used in case of NFCAs. In the following lemmata we show that  
 the corresponding results are no longer true.

\begin{lemma}
 Let $A=(Q,\Sigma,\delta,q_0,F)$ be a NFCA for the finite language $L$.
It is possible that $x_A(s)\sim_L x_A(q)$, but $s$ and $q$ are not mergible.
\end{lemma}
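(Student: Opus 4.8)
The statement is an existence claim, so the plan is to prove it by exhibiting a single witness: a finite language $L$, an NFCA $A$ for $L$, and two states $s,q$ with $x_A(s)\sim_L x_A(q)$ for which none of the merge operations of the preceding definition yields an NFCA for $L$. Before constructing it, it is worth locating precisely why the analogous DFCA statement holds, so that the counterexample can be aimed at that point. In a deterministic cover automaton every word $x$ reaching a state $s$ satisfies $xw\in L(A)\iff w\in L^R_s$, because the run on $xw$ is forced through $s$; hence $x_A(s)\sim_L x_A(q)$ transfers immediately to an equality of the right languages $L^R_s$ and $L^R_q$ on the horizon $\Sigma^{\leq l-\max\{level(s),level(q)\}}$, which is exactly what merging needs. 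In an NFCA this implication breaks: $x_A(s)w\in L$ may hold because of an accepting run that \emph{avoids} $s$, so word similarity no longer controls $L^R_s$. The counterexample must exploit this decoupling.

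Accordingly, I would build a small NFCA (a binary alphabet suffices) in which $s$ sits at a strictly lower level than $q$, the minimal words $x_A(s)$ and $x_A(q)$ are $\sim_L$-similar, but $L^R_s$ and $L^R_q$ disagree. The similarity $x_A(s)\sim_L x_A(q)$ only equates membership of $x_A(s)w$ and $x_A(q)w$ for $|w|\leq l-\max\{level(s),level(q)\}$, so it leaves two sources of freedom to engineer a failure: right words $r$ longer than this horizon, and acceptance that can be routed around $s$ and $q$.

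For the verification I would use the membership characterisations stated just before the lemma. Weak merging is symmetric in the two states, so a single witness word suffices: taking $r\in L^R_q$ with $l-level(q)<|r|\leq l-level(s)$, the word $x_A(s)\,r$ lies in $L^L_s L^R_q\cap\Sigma^{\leq l}$, yet by construction $x_A(s)\,r\notin L$ (its length exceeds the similarity horizon $l-level(q)=l-\max\{level(s),level(q)\}$, so similarity offers no protection, while $x_A(q)\,r$ is simply too long to be required in $L$). This single word breaks both weak merges $s\precapprox q$, $q\precapprox s$ and, since it also lives in the $L^L_s L^R_q$ term, the strong merge $q\precsim s$. It remains to defeat $s\precsim q$: here I would arrange that $q$ accepts some $x_A(q)\,r_0\in L$ with $|x_A(q)\,r_0|\leq l$ via a right word $r_0\notin L^R_s$ recovered by no other run, so that strong merging $s\precsim q$, which retains only $L^R_s$, deletes an accepted word of $L$. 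Having ruled out all four operations, $s$ and $q$ are not mergible.

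The main obstacle is the tension inherent in the statement itself: $x_A(s)\sim_L x_A(q)$ forces the minimal-word behaviours to \emph{agree}, while non-mergibility forces the states to \emph{disagree}. The whole art is to place every disagreement either beyond the horizon $l-\max\{level(s),level(q)\}$ or onto runs that bypass $s$ and $q$, so that no contradiction with the similarity of the two minimal words arises while every merge still manufactures a word in $\Sigma^{\leq l}\setminus L$ (or erases one in $L$). Keeping the example small enough to check the four merge conditions by inspection, yet rich enough to realise this separation, is the only delicate part; once the automaton is fixed, each check is a direct computation against the characterisations above.
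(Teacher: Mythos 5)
There is a genuine gap: the lemma is an existence claim, and your proposal never exhibits the witness. You correctly identify why the deterministic argument breaks down (acceptance of $x_A(s)w$ can be routed around $s$, so $x_A(s)\sim_L x_A(q)$ no longer controls $L^R_s$), you correctly enumerate the four merge operations that must be defeated, and the verification template you give against the characterisations $L^L_sL^R_q\cap\Sigma^{\leq l}\subseteq L$ etc.\ is the right one. But every load-bearing step is phrased as ``I would build/arrange\dots'', and you yourself concede that producing an automaton which simultaneously satisfies all the constraints ``is the only delicate part.'' That delicate part \emph{is} the proof; without a concrete $A$, $L$, $s$, $q$ there is nothing to check, and it is not self-evident that your list of desiderata (similar minimal words, strictly different levels, a right word $r$ in the length window $l-level(q)<|r|\leq l-level(s)$ with $x_A(s)r\notin L$, plus a word of $L$ accepted only through $q$ via some $r_0\notin L^R_s$) is simultaneously realisable until one writes the automaton down.

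It is also worth noting that the paper gets the witness essentially for free, by a simpler mechanism than the one you aim for: it reuses the NFCA of Figure~\ref{f5} (left) for $L_{9,7}$, in which states $1$ and $3$ are both reached by the \emph{same} minimal word, $x_A(1)=x_A(3)$. The hypothesis $x_A(s)\sim_L x_A(q)$ then holds trivially by reflexivity of $\sim_L$, and non-mergibility follows because collapsing the two cycles creates an accepting path for $a^7\notin L_{9,7}$. So you do not need the level gap or the beyond-horizon right words at all; the nondeterministic branching from a single word to two states with incompatible right languages already does the job. If you want to salvage your proof, the quickest fix is to instantiate your plan with this (or any comparably small) explicit automaton and carry out the four membership checks.
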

\begin{proof}
 For the automaton in Figure~\ref{f5}, left, $x_A(3)=x_A(1)$, but the states $1$ and $3$ are not mergible,
as the resulting automaton would not reject $a^7$.
\end{proof}

\begin{lemma}
 Let $A=(Q,\Sigma,\delta,q_0,F)$ be a NFCA for the finite language $L$, and $p,q\in Q$, $p\neq q$.
It is possible to have $x,y\in \Sigma^*$, $p\in \delta(q_0,x)$, $q\in \delta(q_0,y)$,
 $p \sim q$, and  $x\not\sim_L y$.
\end{lemma}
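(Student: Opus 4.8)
The plan is to prove this possibility statement by producing a single small NFCA in which the stated configuration actually occurs. Recall that DFCAs satisfy the opposite implication quoted just above: if two states are similar then every word reaching them is $\sim_L$-similar. So the real content here is to show that nondeterminism destroys precisely this implication. The conceptual reason it can be destroyed is that in an NFCA a word $x$ reaches an entire set $\delta(q_0,x)$ of states, and membership $xw\in L$ is decided by that whole set rather than by any single $p\in\delta(q_0,x)$. Hence $p\sim_A q$ only constrains the two individual states $p$ and $q$, and says nothing about the extra states that $x$ or $y$ may also reach; it is through such an extra state that the word-level behaviour of $x$ and $y$ can diverge.

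Accordingly, I would set up an automaton whose initial transition is genuinely nondeterministic: arrange $\delta(q_0,x)=\{p,r\}$ while $\delta(q_0,y)=\{q\}$, make $p\sim_A q$, and give the auxiliary state $r$ a short accepting suffix $v$ that both $p$ and $q$ reject. Then $xv\in L$ via the path through $r$, whereas $yv\notin L$ because from $q$ (which behaves like $p$ on $v$) one cannot accept; this is exactly $x\not\sim_L y$. Concretely a binary alphabet suffices: take $q_0\stackrel{a}{\longrightarrow}\{p,r\}$, $q_0\stackrel{b}{\longrightarrow}\{q\}$, a single final state reached from $r$ on $v=a$, and let $p,q$ have identical non-accepting behaviour, so that $L=L(A)\cap\Sigma^{\le l}=\{aa\}$ with $l=2$, $x=a$, $y=b$. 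One then verifies in turn that (i) $A$ is a cover automaton for $L=L(A)\cap\Sigma^{\le l}$; (ii) $p\sim_A q$, by comparing the two right-languages up to length $l-\max\{level(p),level(q)\}$, on which they agree (both reject $\varepsilon$, $a$, $b$); and (iii) the suffix $v=a$ distinguishes $x$ from $y$ within the admissible length $l-\max\{|x|,|y|\}$, since $av=aa\in L$ but $bv=ba\notin L$.

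The step requiring the most care is the alignment of the two length thresholds. State similarity is tested up to $l-\max\{level(p),level(q)\}$, while word similarity is tested up to $l-\max\{|x|,|y|\}$, so I would choose $x$ and $y$ to be words of minimal length reaching $p$ and $q$, so that $|x|=level(p)$ and $|y|=level(q)$ and the two bounds coincide, and keep $|v|$ within this common bound. The delicate point is then precisely that $v$ must stay invisible to the state-level test, where $p$ and $q$ agree by rejecting it, while becoming visible at the word level through the extra reachable state $r$. This is what makes the example internally consistent rather than self-contradictory, and it is exactly the place where the DFCA argument, which deals with a deterministic single reachable state and has no such auxiliary state $r$, fails to carry over to the nondeterministic setting.
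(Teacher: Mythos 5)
Your proposal is correct and follows essentially the same route as the paper: both exhibit a concrete NFCA in which the word $x$ nondeterministically reaches an auxiliary state $r$ alongside $p$, so that $r$ supplies an accepting continuation invisible to the state-level similarity test on $p$ and $q$. Your example (with $L=\{aa\}$, $l=2$) is just a smaller instance of the same mechanism as the paper's Figure~\ref{f6} automaton, where $x=ba$ reaches state $4$ in addition to state $2$ and thereby accepts $baa$ while $aaa$ is rejected.
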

\begin{proof}
Consider the language $L=L(A)\cap\{a,b\}^{\leq14}$, where $A$ is depicted in Figure~\ref{f6}.
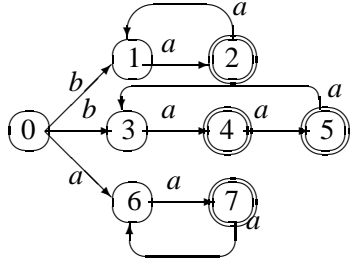
\begin{figure}
\begin{center}
\begin{picture}(100,100)
 \put(10,50){\oval(15,15)}
 \put(7,47){$0$}
 \put(16,50){\vector(1,1){25}}
 \put(25,65){$b$}
 \put(16,50){\vector(1,0){25}}
 \put(30,55){$b$}
 \put(16,50){\vector(1,-1){25}}
 \put(25,30){$a$}
 \put(49,77){\oval(15,15)}
 \put(47,74){$1$}
 \put(87,77){\oval(15,15)}
 \put(87,77){\oval(18,18)}
 \put(84,74){$2$}
 \put(67,88){\oval(40,20)[t]}
 \put(47,90){\vector(0,-1){5}}
 \put(87,90){\line(0,-1){5}}
 \put(87,94){$a$}

 \put(53,75){\vector(1,0){25}}
 \put(60,80){$a$}
 \put(47,50){\oval(15,15)}
 \put(45,47){$3$}
 \put(53,50){\vector(1,0){25}}
 \put(60,55){$a$}
 \put(85,50){\oval(15,15)}
 \put(85,50){\oval(18,18)}
 \put(82,47){$4$}
 \put(91,50){\vector(1,0){25}}
 \put(95,55){$a$}
 \put(122,50){\oval(15,15)}
 \put(122,50){\oval(18,18)}
 \put(120,47){$5$}
 \put(82,62){\oval(75,10)[t]}
 \put(45,63){\vector(0,-1){5}}
 \put(120,63){\line(0,-1){5}}
 \put(123,63){$a$}

 \put(49,23){\oval(15,15)}
 \put(47,20){$6$}
 \put(55,23){\vector(1,0){25}}
 \put(62,28){$a$}
 \put(87,23){\oval(15,15)}
 \put(87,23){\oval(18,18)}
 \put(84,20){$7$}
 \put(68,10){\oval(40,15)[b]}
 \put(48,10){\vector(0,1){5}}
 \put(88,10){\line(0,1){5}}
 \put(92,13){$a$}

\end{picture}
\end{center}
\begin{center}
\caption{{An example where $p\sim_A q$, $x\not\sim_L y$, but $p\in\delta(q_0,x)$ and $q\in\delta(q_0,y)$,
           $aa\not\sim_L ba$, 
$2\in \delta(0,ba)$, \newline$7\in \delta(0,aa)$, and $2\sim_A 7$.}
          }
\end{center}
 \label{f6}
\end{figure}

We have that:
\begin{itemize}
 \item $aa\not\sim_L ba$, because $aaa\notin L$, but $baa\in L$;
 \item $2\in \delta(0,ba)$, $7\in \delta(0,aa)$, and
 \item $2\sim_A 7$, because
       $\delta(2,a^{2k})=\{2\}\subseteq F$, $\delta(2,a^{2k+1})=\{1\}\cap F=\emptyset$,
       $\delta(7,a^{2k})=\{7\}\subseteq F$, $\delta(7,a^{2k+1})=\{6\}\cap F=\emptyset$,
     and $\delta(2,w)=\delta(7,w)=\emptyset$, for all $w\in\Sigma^*-\{a\}^*$.$\Box$
\end{itemize}
\end{proof}

Let us verify the case when two states $p,q$ are similar, or we can distinguish between them.

\begin{lemma}
  Let $A=(Q,\Sigma,\delta,q_0,F)$ be  a NFCA for the finite language $L$, $p,q\in Q$, $p\neq q$,
and either $p,q\in F$, or $p,q\notin F$.
Assume $r\in \delta(p,a)$ and $s\in \delta(q,a)$.
\begin{enumerate}
 \item  If $r\sim_A s $, for all possible choices of $r$ and $s$, then $p\sim_A q$.
 \item  The converse is false, i.e., we may have  $r\not\sim_A s $, for some $r$ and $s$, and $p\sim_A q$.
\end{enumerate}
\end{lemma}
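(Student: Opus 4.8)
The plan is to prove part~1 by induction on the length of the word $w$ used to test $p\sim_A q$, and to dispose of part~2 with a small explicit counterexample. The key preliminary observation I would record is the level bound: if $r\in\delta(p,a)$, then appending $a$ to a shortest word reaching $p$ produces a word of length $level(p)+1$ reaching $r$, so $level(r)\leq level(p)+1$, and likewise $level(s)\leq level(q)+1$; hence $\max\{level(r),level(s)\}\leq \max\{level(p),level(q)\}+1$. This is the cover-automaton-specific bookkeeping that makes the induction close, since it controls how the test window shrinks when we strip off the first letter.

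For the induction itself I would fix $w\in\Sigma^{\leq l-\max\{level(p),level(q)\}}$. The base case $w=\varepsilon$ is precisely the assumption that $p,q$ are both final or both non-final (and $\varepsilon$ is always in the window for a non-trivial NFCA). For $w=aw'$, I expand $\delta(p,aw')=\bigcup_{r\in\delta(p,a)}\delta(r,w')$ and $\delta(q,aw')=\bigcup_{s\in\delta(q,a)}\delta(s,w')$, and observe from the level bound that $|w'|=|w|-1\leq l-\max\{level(p),level(q)\}-1\leq l-\max\{level(r),level(s)\}$, so $w'$ lies inside the window on which each hypothesised similarity $r\sim_A s$ is asserted. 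Then $\delta(p,aw')\cap F\neq\emptyset$ means some $r\in\delta(p,a)$ reaches $F$ via $w'$; choosing any $s\in\delta(q,a)$ and applying $r\sim_A s$ gives $\delta(s,w')\cap F\neq\emptyset$, whence $\delta(q,aw')\cap F\neq\emptyset$, and the converse is symmetric, establishing $p\sim_A q$. The step I expect to be the main obstacle is this last transfer between the two unions: it requires that a witness $s\in\delta(q,a)$ (and dually $r\in\delta(p,a)$) actually exist, which forces the implicit side condition that $\delta(p,a)$ and $\delta(q,a)$ are empty together for each letter $a$. Without it the hypothesis is vacuously true while $p\not\sim_A q$, so I would either fold this condition into the statement or verify it in context; granting it, the union argument is insensitive to the two successor sets having different cardinalities, because ``some successor of $p$ reaches $F$ via $w'$'' and the corresponding statement for $q$ become equivalent.

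For part~2 I would build a minimal NFCA with start state $0$ satisfying $\delta(0,a)=\{p\}$ and $\delta(0,b)=\{q\}$, then $\delta(p,a)=\{r_1,r_2\}$ with $r_1\in F$ and $r_2\notin F$, and $\delta(q,a)=\{s\}$ with $s\in F$, where $r_1,r_2,s$ have no outgoing transitions and $p,q$ are non-final; here $level(p)=level(q)=1$. This accepts the finite language $\{aa,ba\}$, so $l=2$. The relevant test words for $p\sim_A q$ are $\varepsilon,a,b$: at $\varepsilon$ both states are non-final, at $a$ both successor sets meet $F$ (thanks to $r_1$ and $s$), and at $b$ neither has a transition; hence $p\sim_A q$. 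Yet $r_2\in\delta(p,a)$ and $s\in\delta(q,a)$ fail $r_2\sim_A s$ already at $w'=\varepsilon$, since $r_2\notin F$ while $s\in F$. Thus the final successor $r_1$ of $p$ covers for the non-final $r_2$ inside the union, and similarity of $p$ and $q$ says nothing about the individual successor pairs $(r,s)$, which is exactly the converse being false.
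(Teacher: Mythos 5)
Your proof is correct, and for part~1 it is essentially the paper's argument run forwards rather than by contraposition: the paper assumes $p\not\sim_A q$, peels off the first letter of a distinguishing word $aw'$, and extracts a pair $r\in\delta(p,a)$, $s\in\delta(q,a)$ with $\delta(r,w')\cap F\neq\emptyset$ and $\delta(s,w')\cap F=\emptyset$, whereas you unfold $\delta(p,aw')=\bigcup_{r\in\delta(p,a)}\delta(r,w')$ and transfer acceptance across the hypothesis $r\sim_A s$. (Your ``induction'' never actually uses a hypothesis at depth greater than one, so it is really just this one-step case split, which is fine.) Two things you make explicit are glossed over in the paper and are genuinely needed: the level bookkeeping $\max\{level(r),level(s)\}\leq\max\{level(p),level(q)\}+1$, without which $w'$ is not known to lie in the similarity window for $(r,s)$; and the observation that if $\delta(p,a)\neq\emptyset$ but $\delta(q,a)=\emptyset$ the hypothesis is vacuously satisfied while $p\not\sim_A q$ can still hold, so the statement implicitly requires the two successor sets to be empty for the same letters --- the paper's own proof silently assumes a witness $s\in\delta(q,a)$ exists. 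For part~2 you build a fresh four-word-path counterexample over $\{aa,ba\}$ instead of reusing the paper's Figure~6 automaton (where the paper takes $p=q=0$, in tension with the lemma's own hypothesis $p\neq q$); your example is valid, satisfies $p\neq q$, and isolates the mechanism cleanly: a final successor of $p$ can ``cover'' for a non-final one inside the union, so similarity of $p$ and $q$ says nothing about individual successor pairs.
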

\begin{proof}
Assume  $p\not\sim_A q$, and let $w\in \Sigma^{\leq l-\max\{level(p),level(q)\}}\cap\Sigma^+$. 
Because either $p,q\in F$, or $p,q\notin F$, 
we have that $\delta(p,aw)\cap F\neq \emptyset $ and $\delta(q,aw)\cap F= \emptyset $, 
or   $\delta(p,aw)\cap F= \emptyset $, and $\delta(q,aw)\cap F\neq \emptyset $.
If $\delta(p,aw)\cap F\neq \emptyset $ and $\delta(q,aw)\cap F= \emptyset $, 
it follows that we have two states $r\in \delta(p,a)$ and $s\in \delta(q,a)$
such that  $\delta(r,w)\cap F\neq \emptyset $, and $\delta(s,w)\cap F= \emptyset $.
This proves that the first implication is true. For the second implication, 
consider the automaton depicted in Figure~\ref{f6} with $l=14$, and the following states $p,q,r,s$:
 $p=q=0$, $r=1$, $s=3$, and the letter $b$. We have that 
$p\sim q$, $1,3\in\delta(p,b)=\delta(q,b)=\delta(0,b)$, 
but $r\not\sim s$, because $\delta(1,a)\cap F=\emptyset$ 
and $\delta(3,a)\cap F=\{4\}\neq \emptyset$.$\Box$
\end{proof}

This result contrasts with the one for the deterministic case for cover automata, 
and the main reason is the nondeterminism, not the fact that we work with  cover languages.

Next, we would like to verify if similar states can be merged in case of NFCAs, also to check
 which type of merge works.
In case we have two similar states, we can strongly merge them as shown below.
In the case of DFCAs, if two states are similar, these can be merged. We must ensure that 
the same result is also true for  NFCAs, and the next theorem 
shows it.

\begin{theorem}
\label{ltech1}
 Let $A=(Q,\Sigma,\delta,q_0,F)$ be an NFCA for $L$,
and $p,q\in Q$ such that $p\neq q$,  and $p\sim q$.
Then we have
\begin{enumerate}
 \item if $level_A(p)\leq level_A(q)$, then $p\precsim q$.
 \item It is possible that $p\not\precapprox q$.
\end{enumerate}
\end{theorem}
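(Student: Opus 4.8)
The plan is to treat the two assertions separately; the first is the substantive one.

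\textbf{Part 1.} Since $level_A(p)\le level_A(q)$ we have $\max\{level_A(p),level_A(q)\}=level_A(q)$, so $p\sim q$ means exactly that $\delta(p,y)\cap F\ne\emptyset$ iff $\delta(q,y)\cap F\ne\emptyset$ for every $y$ with $|y|\le l-level_A(q)$. Writing $A'$ for the strong merge of $q$ into $p$, proving $p\precsim q$ amounts to checking $L(A')\cap\Sigma^{\le l}=L$. I would prove both inclusions as instances of one \emph{admissible} claim: for a start state $s\in Q'$ and a word $z$ with $level_A(s)+|z|\le l$, the $A'$-computation from $s$ on $z$ meets $F$ iff the $A$-computation from $s$ on $z$ does. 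The point of the level hypothesis is that this ``admissibility'' budget is preserved under the rerouting below, and that it forces every appeal to $\sim$ to concern a suffix inside the similarity window.

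For soundness ($\subseteq$) I take an accepting $A'$-run from $s$ on $z$ and split at the \emph{first} redirected edge (an edge into $p$ coming from an $A$-edge into $q$), at position $i$. The prefix uses only original transitions, hence is a genuine $A$-run reaching some $s_{i-1}$ with $q\in\delta(s_{i-1},a_i)$; thus $level_A(q)\le level_A(s_{i-1})+1\le level_A(s)+i$. The remaining suffix $z'$ has $|z'|=|z|-i$, so $|z'|\le l-level_A(s)-i\le l-level_A(q)$ (inside the window) and $level_A(p)+|z'|\le level_A(q)+|z'|\le level_A(s)+|z|\le l$ (admissibility is inherited, using $level_A(p)\le level_A(q)$). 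The tail is a shorter admissible $A'$-computation from $p$, so by induction on $|z|$ it is matched by $\delta(p,z')\cap F\ne\emptyset$ in $A$; similarity then yields $\delta(q,z')\cap F\ne\emptyset$, and splicing the real prefix, the edge into $q$, and this $A$-run gives $\delta(s,z)\cap F\ne\emptyset$. Completeness ($\supseteq$) is symmetric: starting from an $A$-run I split at the \emph{first} visit to $q$; the same two inequalities put the outstanding suffix in the window and keep it admissible, similarity produces a (shorter) $A$-run from $p$, the induction converts it to an $A'$-run, and I splice it after the redirected edge into $p$. Because strong merging keeps $p$'s \emph{own} outgoing edges and discards those of $q$, each transported tail is a bona fide computation of the respective automaton, which is exactly what lets the two inductions close.

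The hard part is the length-bounded nature of $\sim$: a careless surgery would ask $p$ to imitate $q$ on a suffix longer than $l-level_A(q)$, where similarity says nothing. The device that defuses this is the admissibility invariant together with the inequality $level_A(q)\le level_A(s)+i$ obtained from the \emph{real} prefix (available precisely because we cut at the first redirected edge, respectively the first visit to $q$); these guarantee the suffix length never exceeds the window and never breaks the budget, while each recursion strictly shortens $z$, so the inductions are well founded. This is the only place the assumption $level_A(p)\le level_A(q)$ is used, and it is genuinely needed, as Part 2 shows.

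\textbf{Part 2.} To see that weak merging can fail I would exhibit a small unary NFCA in which $\sim$ is \emph{only} length-bounded. The recipe is to place $p$ at a strictly smaller level than $q$ and to let $q$ sit on a short cycle whose sole ``misaligned'' accepting run has length just beyond $l-level_A(q)$; then within the window $p$ and $q$ agree, so $p\sim q$, even though the full right languages $L^R_p$ and $L^R_q$ differ. Weak merging dumps $\delta(q,a)$ into $\delta(p,a)$, and since $p$ is reachable by a shorter word the merged automaton can enter that cycle \emph{earlier} than $q$ ever could; a run that uses the inherited edge at the low level of $p$ and then rides the cycle reaches a final state on some $a^m$ with $m\le l$ that is not in $L$ (in $A$ the corresponding acceptance had length $>l$ and was harmlessly outside the cover window). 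Hence $L(A')\cap\Sigma^{\le l}\ne L$ and $p\not\precapprox q$. This is the same mechanism visible around Figure~\ref{f6}, but engineered with a deliberate level gap so that the extra outgoing behaviour of $q$ becomes harmful rather than redundant; note that strong merging avoids it by \emph{deleting} $q$'s outgoing edges, which is precisely why Part~1 holds for $\precsim$ while Part~2 obstructs $\precapprox$.
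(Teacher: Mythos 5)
Your Part 1 follows essentially the same route as the paper's proof: induction on the length of the remaining input, splitting an accepting run at the first occurrence of $q$ (respectively, the first redirected edge), and invoking $p\sim q$ on the leftover suffix; your explicit bookkeeping of the inequalities $level_A(q)\le level_A(s)+i$ and $level_A(p)+|z'|\le level_A(q)+|z'|\le l$ makes precise exactly what the paper's rather garbled induction leaves implicit, so this part is sound and, if anything, cleaner than the original. Part 2, however, is an existence claim, and you give only a recipe rather than a witness; to close it you must actually exhibit the automaton and verify both that $p\sim q$ holds and that the weak merge accepts a short word outside $L$. The paper does precisely this with $L=\{a^2,a^4\}$, $l=4$ (Figure~\ref{f7}): state $5$ at level $1$ and state $3$ at level $3$ carrying a self-loop are similar because the window is only $\Sigma^{\le 1}$, yet weakly merging them lets the merged state, now reachable in one step, ride the self-loop and accept $a^3\notin L$. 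This instantiates your recipe exactly (the ``misaligned'' run of $q$ of length just beyond $l-level_A(q)$ becomes harmful once it is attached at $p$'s lower level), so your mechanism is correct but the argument is incomplete until such a concrete example is written down and checked.
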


\begin{proof}
For the first part, let $A'$ be the automaton obtained from $A$ by strongly merging $q$ in $p$.
We need to show that
$A'$ is a cover NFCA for $L$.
Let $w=w_1\ldots w_n$ be a word in $\Sigma^{\leq l}$, $n\in \N$ and 
$w_i\in \Sigma$ for all $i$, $1\leq i\leq n$.
We now prove that $w \in L$ iff $\delta'(q_0, w) \cap F'\neq\emptyset$.

If we can find the states $\{q_0,q_1,\ldots,q_n\}$ such that
$q_1\in \delta(q_0,w_1)$,
$q_2\in \delta(q_1,w_2)$,
\ldots,
$q_n\in\delta(q_{n-1},w_n)$,
$q_n\in F$ and $q\notin \{q_0,q_1,\ldots,q_n\}$,
then 
$q_1\in \delta'(q_0,w_1)$,
$q_2\in \delta'(q_1,w_2)$,
\ldots,
$q_n\in \delta'(q_{n-1},w_n)$,
$q_n\in F'$, i.e., 
$\delta'(q_0, w) \cap F'\neq \emptyset$.
Assume $q=q_j$, and $j$ is the smallest with this property.
If $j=n$, then $q\in F$, which implies $p\in F$, then 
$q_1\in \delta'(q_0,w_1)$,
$q_2\in \delta'(q_1,w_2)$,
\ldots,
$q_n\in\delta'(p,w_n)$,
which means  $\delta'(q_0, w) \cap F'\neq\emptyset$.

Assume the statements hold for $|w_j\ldots w_n|<l'$ for $l'<l-|w|$
($l-|w_1...w_j|\leq l-level(q)$), and 
consider the case when 
 $|w_{j-1}w_{j}\ldots w_n|=l'$.
If for every non-empty prefix of $w_{j+1}\ldots w_n$, $w_{j-1}\ldots w_h$,
$q\notin \delta(p,w_{j-1}\ldots w_h)$, then
$\delta(p,w_{j+1}\ldots w_n)\in F-\{q\}$ iff 
$\delta(q,w_{j+1}\ldots w_n)\in F-\{q\}$ , i.e., 
$\delta'(p,w_{j+1}\ldots w_n)\cap F'\neq \emptyset$
iff
$\delta(q,w_{j+1}\ldots w_n)\cap F\neq \emptyset$.

Otherwise, let $h$ be the smallest number such that 
$q\in \delta(q,w_{j+1}\ldots w_h$.
Then $|w_{h+1}\ldots w_n|<l'$ (and $p\in \delta'(p,w_j\ldots w_h)$).
By induction hypothesis,
$\delta'(p,w_{h+1}\ldots w_n)\cap F'\neq \emptyset$ iff $\delta(q,w_{h+1}\ldots w_n)\cap F\neq \emptyset$.
Therefore, $\delta(p,w_{j+1}\ldots w_hw_{h+1}\ldots w_n)\cap F'\neq \emptyset$ iff 
$\delta(q, w_{j+1}\ldots w_hw_{h+1}\ldots w_n)\cap F\neq\emptyset$,
proving the first part.
For the second part, consider the automaton in Figure~\ref{f7} as a NFCA for $L=\{a^2,a^4\}$. We have that
 $l=4$ and $3\sim 5$, because $level(3)=3$, and $\delta(3,\varepsilon)\cap F=\delta(5,\varepsilon)\cap F=\emptyset$
$\delta(3,a)\cap F=\{4\}$, $\delta(5,a)\cap F=\{6\}$.
We cannot weakly merge state $3$ with state $5$, as we would recognize $a^3\notin L$.
In Figure~\ref{f8} we have the result for strongly merging state $3$ in state $5$.
\begin{figure}
\begin{picture}(150,80)
\put(-5,50){\vector(1,0){5}}  
\put(8,50){\oval(15,15)}  
\put(5,47){$0$}
\put(16,50){\vector(1,0){27}}
\put(25,55){$a$}    
\put(16,48){\vector(1,-1){26}}
\put(20,30){$a$}
\put(50,20){\oval(15,15)}
\put(47,17){$5$}    
\put(57,20){\vector(1,0){24}}
\put(65,25){$a$}
\put(90,20){\oval(15,15)}
\put(90,20){\oval(18,18)}
\put(87,17){$6$}    

\put(50,50){\oval(15,15)}  
\put(47,47){$1$}
\put(57.5,50){\vector(1,0){25}}  
\put(65,55){$a$}    

\put(90,50){\oval(15,15)}  
\put(87,47){$2$}
\put(98,50){\vector(1,0){25}}  
\put(105,55){$a$}    

\put(130,50){\oval(15,15)}  
\put(127,47){$3$}
\put(138,50){\vector(1,0){25}}  
\put(145,55){$a$}    
\put(130,71){\oval(15,15)[t]}  
\put(137,71){\line(-1,-3){5}}  
\put(123,71){\vector(1,-3){5}}  
\put(137,77){$a$}    

\put(171,50){\oval(15,15)}
\put(171,50){\oval(18,18)}    
\put(168,47){$4$}
\end{picture}
\hfill
\begin{picture}(150,80)
\put(-7,60){\vector(1,0){7}}
\put(8,60){\oval(15,15)}  
\put(5,57){$0$}
\put(15,60){\vector(1,0){27}}
\put(25,65){$a$}    
\put(15,57){\vector(1,-1){23}}
\put(18,43){$a$}
\put(47,30){\oval(19,15)}
\put(39.5,27){$3,5$}    
\put(37.5,30){\vector(-1,0){21}}  
\put(56.5,30){\vector(1,0){27}}
\put(65,33){$a$}
\put(92,30){\oval(15,15)}
\put(92,30){\oval(18,18)}
\put(89,27){$6$}    

\put(50,60){\oval(15,15)}  
\put(47,57){$1$}
\put(58,60){\vector(1,0){26.5}}  
\put(65,64){$a$}    

\put(92,60){\oval(15,15)}  
\put(89,57){$2$}
\put(92,52){\vector(-2,-1){36}}  
\put(65,45){$a$}    


\put(20,35){$a$}    
\put(47,7){\oval(15,15)[b]}  
\put(55,7){\line(-1,3){5}}  
\put(40,7){\vector(1,3){5}}  
\put(57,5){$a$}    

\put(8,30){\oval(15,15)}
\put(8,30){\oval(18,18)}    
\put(5,27){$4$}
\end{picture}
\caption{Example for weakly merging failure and similar states.}
\label{f7}
\end{figure}
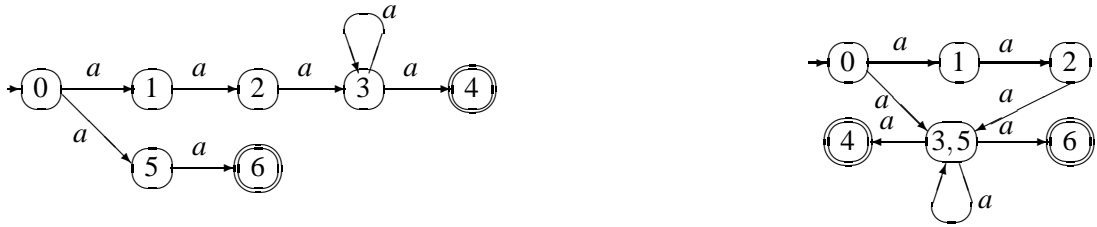
\end{proof}

\begin{figure}
\begin{center}
\begin{picture}(100,60)
\put(-5,50){\vector(1,0){5}}
\put(8,50){\oval(15,15)}  
\put(5,47){$0$}
\put(15.5,50){\vector(1,0){26.5}}
\put(25,55){$a$}    
\put(15.5,48){\vector(1,-1){22.5}}
\put(20,30){$a$}
\put(46,20){\oval(18,15)}
\put(43,17){$5$}    
\put(56,20){\vector(1,0){28}}
\put(65,22){$a$}
\put(92,20){\oval(15,15)}
\put(92,20){\oval(18,18)}
\put(89,17){$6$}    

\put(50,50){\oval(15,15)}  
\put(47,47){$1$}
\put(58,50){\vector(1,0){26.5}}  
\put(65,55){$a$}    

\put(92,50){\oval(15,15)}  
\put(89,47){$2$}
\put(90,42){\vector(-2,-1){35}}  
\put(65,33){$a$}    



\end{picture}
\end{center}
\caption{Example for strongly merging similar states for the example presented in Figure~\ref{f7}.}
\label{f8}
\end{figure}
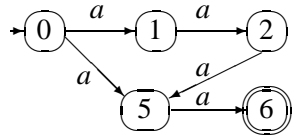
We can observe that strongly merging states does not add words in the language, 
while weakly merging may add words. 
Because any DFCA is also a NFCA, then some smaller automata can be obtained from 
larger ones without using state merging technique, and the following lemma presents such a case. 
Also, the automaton in Figure~\ref{f2} is obtained from automaton in Figure~\ref{f1} by strongly merging states 
$0,\ldots -m+1$ into state $-m$.

\begin{lemma}
Let $A=(Q,\Sigma,\delta,q_0,F)$ be an NFCA for $L$, and consider the reduced sub-automaton
generated by state $p$, $A=(Q_R,\Sigma,\delta_R,p,F)$, i.e.,  $Q_R$ 
contains only reachable and useful states, and $\delta_R$ is the induced transition function.
If $\delta(s,a)\cap Q_R=\emptyset$, for all $s\in (Q\setminus Q_R)$, 
 we can find two regular languages $L_1$,$L_2$ such that
\begin{itemize}
 \item $L_p=(L_1\cup L_2)\cap \Sigma^{\leq l-level(p)}$, and
 \item $nsc(L_1)+nsc(L_2)<\#Q_R+1$,
\end{itemize}
then $A$ is not minimal.
\end{lemma}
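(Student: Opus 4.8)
The plan is to prove the statement directly, by exhibiting under the two language hypotheses an NFCA $A'$ for $L$ with strictly fewer states than $A$; this contradicts minimality. The raw material is supplied by the definition of $nsc$: choose minimal NFAs $M_1=(Q_1,\Sigma,\delta_1,s_1,F_1)$ and $M_2=(Q_2,\Sigma,\delta_2,s_2,F_2)$ with $L(M_i)=L_i$ and $\#Q_i=nsc(L_i)$. From them I would assemble a single NFA $M$ for $L_1\cup L_2$ by taking the disjoint union of $M_1$ and $M_2$ and identifying their two start states into one, which I reuse as $p$: the merged state $p$ inherits all outgoing transitions of $s_1$ and of $s_2$, and is declared final exactly when $s_1\in F_1$ or $s_2\in F_2$, i.e. when $\varepsilon\in L_1\cup L_2$. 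This $M$ has $nsc(L_1)+nsc(L_2)-1$ states, the single saved state coming precisely from reusing $p$ as the common start. It is this saving that turns the counting hypothesis $nsc(L_1)+nsc(L_2)<\#Q_R+1$ into a strict drop in the global state count.

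Next comes the surgery on $A$. I would delete every state of $Q_R$ except $p$, together with all their transitions, and graft $M$ onto $p$, while leaving every state of $Q\setminus Q_R$ and every transition originating in $Q\setminus Q_R$ untouched. The isolation hypothesis is what makes this safe: since $\delta(s,a)\cap Q_R=\emptyset$ for all $s\in Q\setminus Q_R$, removing the interior of the module $Q_R$ cuts no transition coming from the rest of the automaton, so the modification is confined to the behaviour governed by $p$. Moreover any transition leaving $Q_R$ must land on a state that is reachable from $p$ yet lies outside $Q_R$, hence on a non-useful state, so discarding such transitions loses no accepting run. Calling the result $A'$, its size is $\#Q-\#Q_R+(nsc(L_1)+nsc(L_2)-1)$, which by the counting hypothesis is strictly smaller than $\#Q$.

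It then remains to verify that $A'$ is still a cover automaton for $L$, i.e. $L(A')\cap\Sigma^{\leq l}=L$. I would compare accepting computations of $A$ and $A'$ on each $w$ with $|w|\leq l$. Computations avoiding $p$ run entirely inside $Q\setminus Q_R$, where $A$ and $A'$ coincide, so they accept the same words. A computation that reaches $p$ does so after consuming at least $level(p)$ letters, and the remaining suffix has length at most $l-level(p)$; what is accepted from $p$ onward in $A'$ is exactly $(L_1\cup L_2)\cap\Sigma^{\leq l-level(p)}$, which by the language identity equals the local language $L_p$ that $p$ recognised in $A$. Piecing the two cases together yields $L(A')\cap\Sigma^{\leq l}=L(A)\cap\Sigma^{\leq l}=L$, and since $A'$ is smaller, $A$ is not minimal.

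The step I expect to fight hardest is the correctness of the merged-start union $M$ over the bounded-length window. Fusing $s_1$ and $s_2$ into a single state $p$ is harmless only if a run cannot re-enter $p$ and thereby cross over from the $M_1$-part to the $M_2$-part, accepting a word that belongs to neither $L_1$ nor $L_2$; thus I would have to bound or rule out such re-entries, exploiting that agreement is only required on $\Sigma^{\leq l-level(p)}$ together with the no-return consequence of the isolation hypothesis. Concretely, one must show that truncation at length $l-level(p)$ reconciles the possibly-larger language of the merged automaton with $L_p$. This length-bounded matching between the local contribution of the module and the global finite language $L$ is the crux; the state count and the locality of the excision are then bookkeeping.
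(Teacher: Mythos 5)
Your overall strategy---excise the sub-automaton rooted at $p$ and graft in NFAs for $L_1$ and $L_2$---is the same as the paper's, but the specific realization has a genuine gap, and it is exactly the one you flag at the end. Fusing the two initial states $s_1$ and $s_2$ into the single state $p$ is \emph{not} correct in general: if the minimal NFA $M_1$ has a transition back into $s_1$ (and nothing in the definition of $nsc$ lets you assume otherwise---forcing the initial state to have no incoming transitions can cost an extra state, e.g.\ for cyclic unary languages), then a run of $M$ can read a word $u$ labelling an $s_1\to s_1$ loop, land in the merged state $p$, and continue with $M_2$'s transitions, accepting $uv$ with $v\in L_2$. Such a word need not belong to $L_1\cup L_2$, and it can easily have length at most $l-level(p)$, so the truncation to $\Sigma^{\leq l-level(p)}$ does not rescue you: $A'$ would then accept a word of length at most $l$ that is not in $L$, and $A'$ fails to be a cover automaton. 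The isolation hypothesis $\delta(s,a)\cap Q_R=\emptyset$ for $s\in Q\setminus Q_R$ is also of no help here, since it constrains transitions from outside the module and says nothing about re-entries into $s_1$ or $s_2$ inside $M_1$ and $M_2$. So the ``step you expect to fight hardest'' is not mere bookkeeping to be deferred; it is a counterexample-shaped hole in the construction.

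The paper sidesteps this by keeping $Q_1$ and $Q_2$ as disjoint state sets and letting $p$ merely \emph{branch} nondeterministically into them: $\delta_B(p,a)\supseteq\delta_1(q_{0,1},a)\cup\delta_2(q_{0,2},a)$, while states of $Q_i$ keep their own transition function $\delta_i$. Once a run enters $Q_1$ it stays in $Q_1$, so no cross-over word can be accepted. The price is that the state count becomes delicate (the paper's own bookkeeping here is loose, and in its worked example for $L_{9,7}$ the redundant initial states are silently dropped because they are simulated by the states formerly pointing to $p$); your identification of the start states was precisely the move that made your count come out cleanly, and it is the move that breaks correctness. To repair your proof you would need either to justify that $M_1$ and $M_2$ can be chosen with non-re-enterable initial states at no cost in states (false in general), or to adopt the paper's branching construction and redo the count.
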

\begin{proof}
Let $A_i=(Q_i,\Sigma,\delta_i,q_{0,i},F_i)$, $i=1,2 $ be two NFAs
for $L_1$ and $L_2$, and  $L_p=(L_1\cup L_2)\cap \Sigma^{\leq l-level(p)}$.
We define the automaton $B=((Q\setminus Q_R)\cup \{p\}\cup Q_1\cup Q_2, \Sigma, \delta_B,q_0,F_B)$ as follows:
$F=(F\setminus Q_R)\cup F_1\cup F_2$, in case $p\notin F$, 
and $F=(F\setminus Q_R)\cup F_1\cup F_2\cup\{p\}$ in case $p\in F$.
For the transition function, we have $\delta_B(s,a)=\delta(s,a)$ if $s\in (Q\setminus Q_R)$,
$\delta_B(s,a)=\delta_i(s,a)$ if $s\in Q_i$, $i=1,2$,
and
$\delta_B(p,a)=\delta_1(q_{0,1},a)\cup\delta_2(q_{0,2},a)\cup\delta(p,a)\setminus Q_R$, if $p\notin\delta(p,a)$,
and $\delta_B(p,a)=\delta_1(q_{0,1},a)\cup\delta_2(q_{0,2},a)\cup\delta(p,a)\setminus Q_R\cup\{p\}$, if $p\in\delta(p,a)$.
Obviously, the automaton $B$ recognizes the cover language for $L$, 
and its state complexity is lower.
\end{proof}
This technique was used to produce the minimal NFCA for $L_{9,7}$ in Figure~\ref{f5}.

\section{Conclusion}
\label{openproblems}
In this paper we showed that NFCAs are a more compact representation of finite languages than both 
NFAs and DFCAs, therefore it is a subject worth investigating.
We presented a lower-bound technique for state complexity of NFCAs, and proved its limitations.
We showed that minimizing NFCAs has at least the same level of difficulty as minimizing 
general NFAs, and that extra information about the maximum length of the words in the language 
does not help reducing the time complexity.
We checked if some of the results involving reducing the size of automata for NFAs and DFCAs
are still valid for NFCAs, and showed that most of them are no longer valid. 
However, the method of strong merging states still works in case of NFCAs, and we showed 
that there are also other methods that could be investigated.

As future research, below is a list of problems we consider worth investigating:
\begin{enumerate}
 \item check if the bipartite graph lower-bound
 technique can be applied for \mbox{NFCAs;}
 \item find bounds for nondeterministic cover state complexity;
 \item investigate the problem of magic numbers for NFCAs. 
In this case, we can relate either to DFCAs, or DFAs. 
\end{enumerate}

\bibliographystyle{eptcs.bst}
\bibliography{MinNFCA}
\end{document}